\newtheorem{theorem}{Theorem}
\newtheorem{lemma}{Lemma}
\newtheorem{corollary}{Corollary}
\newtheorem{definition}{Definition}
\title{Non-self-intersecting trajectories and their applications to satellite constellation design and orbital capacity}
\author{David Arnas\thanks{Purdue University, IN, USA. Email: \textsc{darnas@purdue.edu}}, Richard Linares\thanks{Massachusetts Institute of Technology, MA, USA. Email: \textsc{linaresr@mit.edu}}}
\begin{document}
	
	\date{}	
	
	\maketitle
	
	\thispagestyle{firststyle}

\begin{abstract}
    This work focuses on the generation of non-self-intersecting relative trajectories, and their applications to satellite constellation design, slotting architectures, and Space Traffic Management. To that end, this paper introduces two theorems to determine when two spacecrafts share the same relative trajectory, and to identify the only conditions that allow the existence of non-self-intersecting relative trajectories. Then, these results are applied first to the estimation of the limits of the orbital capacity at a given altitude, and second, to the design of satellite constellations and slotting architectures that present no conjunctions between any element compliant with these space structures.
\end{abstract}

\section{Introduction}

In the last decade we have witnessed a dramatic increase on the number of objects orbiting the Earth, especially in Low Earth Orbits (LEO) due to its interesting applications for telecommunications and Earth observation. This situation has led to an increase on the number of orbital conjunctions between satellites and thus the safety of not only current but also future missions. On top of that, requiring to perform additional collision avoidance maneuvers translates into a reduction of the fuel budget for space missions and an increase on the costs for controlling spacecraft in these congested regions. Unfortunately, this is a trend that is expected to continue in the future with more spacecraft and satellite constellations planned to be launched on the following years.

In this regard, a wide variety of studies have been performed to assess orbital conjunctions~\cite{alfano,conjunction} and orbital risk~\cite{risk_patera,risk_alfano}, as well as different control techniques to perform collision avoidance maneuvers~\cite{CAM_formation,opt_impulsive,opt_low}. In here instead, we follow the idea from Ref.~\cite{stm} to create a system level solution based on defining slotting architectures that ensure that no conjunction will happen between satellites compliant with the defined structure. To that end, we make use of analytical constellation theory to define the positions of these slots and to analyze the results of the resultant distributions.

Analytical constellation design formulations have been and are extensively used to define and analyze the complex problem that represents satellite constellation design. Examples of this kind of formulations include, for instance, Walker Constellations~\cite{WalkerSC_1}, Draim Constellations~\cite{Draim4}, Dufour Constellations~\cite{dufour}, Rosette Constellations~\cite{Ballard}, Kinematically Regular Satellite Networks~\cite{Mozhaev}, and Flower Constellations~\cite{MortariFC}. From them, it is of special interest the case of Flower Constellations, specially the so called Lattice Flower Constellations~\cite{2dlfc,3dlfc,4dlfc,ndnfc}, as they represent the generalization of these analytical constellation formulations based on uniform distributions. In particular, 2D Lattice Flower Constellations was the formulation proposed to generate optimal slotting architectures in Ref.~\cite{stm}. 

The objective of this work is to use analytical constellation theory to define satellite constellations and slotting architectures in circular orbits that present no conjunctions between compliant elements from the structure. In this sense, this work aims to complement and extend the results presented in Ref.~\cite{stm} for the Space Traffic Management problem by proposing a design framework to analytically estimate the maximum capacity of these systems while taking into account the minimum distance between satellites. To that end, this work focuses on the generation of satellite constellations based on circular orbits where spacecrafts are distributed following a single relative trajectory that present no self-intersections in a given rotating frame of reference. This means that no matter where satellites are located along this relative trajectory, there is no risk of conjunction since the constellation satellites follow the same relative path with a given delay between satellites. This idea was initially explored in Ref~\cite{largeconstellations}, however, results were not complete and their design was limited to a specific range of inclinations. In here, we aim to extend and generalize these results by providing a set of theorems that define the conditions that allow a non-self-intersecting trajectory to exist, and show how to apply this result to satellite constellation design, slotting architectures, and the estimation of orbital capacity. 

This work is organized as follows. First, a summary of the satellite constellation design formulations used in this work are presented. Particularly, this work is based on the formulation introduced in Ref.~\cite{Time} in order to define constellations following a set of relative space-tracks. Second, the conditions to generate a relative trajectory with non-self-intersections are introduced. This includes the introduction of a general theorem to determine if two satellites belong to the same relative trajectory, a theorem to identify the only conditions that allow a relative trajectory to have non-self-intersections, and a lemma that provides a closed form approximation to the previous theorem. Third, this design of non-self-intersecting trajectories is applied to satellite constellation design, and more in particular, to the efficient computation of minimum distance between satellites in these structures and the analytical evaluation of the capacity of these systems. Examples of application are also provided.


\section{Preliminaries}

This section includes a summary of some analytical satellite constellation design formulations that will be used in this manuscript. Particularly, we include the time distribution formulation proposed in Ref.~\cite{Time} as it is the basis used to define and study non-self-intersecting relative trajectories and constellations. In addition, the formulations of Walker Constellations and 2D Lattice Flower Constellations are included due to their historical importance and the fact that they generalize the problem of uniform distribution of satellites in space.

\subsection{Time distributions in satellite constellation design}

Usually, satellite constellation designs as Walker Constellations or 2D Lattice Flower Constellations define their satellite distribution in the inertial frame of reference. However, there are some problems where it is more interesting to study the system from the point of view of a rotating frame of reference. Examples of that include, for instance, the definition of the ground-track distribution of the constellation for Earth observation satellites~\cite{Time2}, computing the revisiting time~\cite{maintenance}, or including orbital perturbations in the nominal design of the constellation~\cite{nominal}. To that end, Ref.~\cite{Time} proposed an analytical formulation to perform the satellite distribution definition in the rotating frame of reference instead that in the inertial frame of reference. 

The idea behind this satellite constellation design methodology is to define a set of relative trajectories in a given rotating frame of reference (such as the one associated with the Earth) and use the relative along track time distance ($t_{kq}$) and cross track angular distances between relative trajectories ($\Delta\Omega_k$) to establish the relative distribution of the constellation. Particularly, all the satellites in the constellation share the same values of semi-major axis ($a$), eccentricity ($e$), inclination ($inc$), and argument of perigee ($\omega$) while the distribution is performed in the right ascension of the ascending node ($\Omega$) and mean anomaly ($M$) of the orbits. Let $\alpha$ be the spin rate of the rotating frame of reference about the z axis, where $\alpha$ is positive if the rotation is towards the East, and negative if it is towards the West. Then, the relative satellite distribution can be defined as~\cite{Time}:
\begin{eqnarray} \label{eq:apgtc}
\Delta \Omega_{kq} & = & \Delta\Omega_k - \alpha(t_{kq}-t_0), \nonumber \\
\Delta M_{kq} & = & n(t_{kq}-t_0),
\end{eqnarray}
where $\Delta \Omega_{kq}$ and $\Delta M_{kq}$ describe the relative position of each spacecraft with respect to a reference position of the constellation defined by $t_0$, $n$ is the mean motion of the satellites, and $\{k,q\}$ names each satellite in the relative track $k$ and position $q$ of the relative track.

In this work we are interested in uniform distributions over a single relative trajectory that is closed in the chosen rotating reference frame. This implies that $\Delta\Omega_k = 0$ for all the satellites of the constellation. Moreover, since the relative trajectory is closed, a cycle period $T_c$ can be defined such that:
\begin{equation}\label{eq:cycle}
T_c = N_pT = N_dT_d,
\end{equation}
where $N_d$ and $N_p$ are the number of complete rotations of the satellite and the rotating reference frame respectively to repeat the cycle. Note that $N_d$ and $N_p$ are coprime integer numbers. Therefore, the former equation can be rewritten as:
\begin{equation}\label{eq:cycle2}
T_c = N_p\displaystyle\frac{2\pi}{n} = N_d\frac{2\pi}{\alpha}.
\end{equation}
Additionally, and since we are interested in uniform distributions, we can define the time distribution of satellites as:
\begin{equation}
\displaystyle\frac{q}{N_s} = \frac{t_q - t_0}{T_c}, 
\end{equation}
where $q\in\{1,\dots,N_s\}$ names each of the $N_s$ satellites of the constellation. Thus, we can define the uniform distribution in a single relative trajectory through the following expression:
\begin{eqnarray} \label{eq:tc_single}
\Delta \Omega_{q} & = & - 2\pi N_d\displaystyle\frac{q}{N_s}, \nonumber \\
\Delta M_{q} & = & 2\pi N_p\displaystyle\frac{q}{N_s}.
\end{eqnarray}

\subsection{Walker Constellations}

Walker-Delta Constellations~\cite{WalkerSC_1} are the most used satellite constellation design both in the literature and industry. They are based on the idea of generating an uniform distribution of satellites based on circular orbits, where the distribution itself in defined in the mean anomaly and right ascension of the ascending node of the spacecraft's orbits. This means that all satellites in the constellation share the same semi-major axis, eccentricity, inclination, and argument of perigee. 

A Walker Constellation can be defined in a compact form as: $inc: \ t/y/f$, where $inc$ is the inclination of the constellation, $t$ is the number of satellites, $y$ is the number of orbital planes, and $f \in \{0, \ldots, y-1\}$ is the distribution parameter that controls the relative phasing in mean anomaly between two consecutive orbital planes from the constellation. In other words, the distribution of satellites can be expressed as:
\begin{eqnarray}\label{eq:walker0}
	\Delta\Omega_{ij} & = & 2\pi\displaystyle\frac{i}{y}, \nonumber \\ 
	\Delta M_{ij} & = & 2\pi\displaystyle\frac{y}{t}j + 2\pi\displaystyle\frac{f}{t}i,
\end{eqnarray}
where $\Delta \Omega_{ij}$ and $\Delta M_{ij}$ define the relative position of the spacecraft with respect to a reference satellite of the constellation, and $i$ and $j$ name each of the satellites located in the orbit $i$ and position $j$ of that orbit.

\subsection{2D Lattice Flower Constellations}

2D Lattice Flower Constellations~\cite{2dlfc} is a satellite constellation design formulation, based on the properties of Number Theory, that performs uniform distributions of satellites in the right ascension of the ascending node and the mean anomaly of the orbits. As in the previous formulations, the values of semi-major axis, eccentricity, inclination, and argument of perigee are common for all the satellites of the constellation. 2D Lattice Flower Constellations are defined using three integer parameters, the number of orbits $N_{o}$, the number of satellites per orbit $N_{so}$, and the combination number $N_{c}$, and the following relative distribution of satellites $\{\Delta \Omega_{ij}, \Delta M_{ij}\}$:
\begin{eqnarray}\label{eq:2dlfc}
\Delta\Omega_{ij} & = & \displaystyle\frac{2\pi}{N_{o}}i, \nonumber \\ 
\Delta M_{ij} & = & \displaystyle\frac{2\pi}{N_{so}}\Big(j - \frac{N_{c}}{N_{o}}i\Big),
\end{eqnarray}
where $i\in\{1,\dots,N_{o}\}$ and $j\in\{1,\dots,N_{so}\}$ names each of the satellites of the constellation in their orbital planes and their position in the orbit respectively. Note that this formulation is equivalent to the one of Walker Constellations, where the combination number ($N_{c}$) is this time the parameter that controls the relative phasing between two consecutive orbital planes from the configuration. In fact, 2D Flower Constellations are the generalization of Walker Constellations and can be applied to both circular and elliptic orbits. The relation between the parameters from Walker and 2D Lattice Flower Constellations are: $t = N_{o}N_{so}$, $y = N_{o}$, and $f = -N_{c} \mod(L_{\Omega})$. For this reason, we will focus on this work on the 2D Lattice Flower Constellation formulation since it is a more general formulation.


\section{Non-self-intersecting trajectories}

As a first step in the design of this space architectures, it is necessary to find the conditions that assure a non-self-intersecting relative trajectory. To that end, two theorems and a lemma are proposed in the following lines that represent the generalization and extension of the partial solution presented in Ref.~\cite{largeconstellations}. In that regard, Theorem~\ref{theorem:compatibility} proves the condition for two satellites at the same inclination and semi-major axis to share the same relative trajectory. Theorem~\ref{theorem:nonselfintersectioin}, on the other hand, presents the constraints that orbital parameters and the rotating frame have to fulfill in order to have a non-self-intersecting relative trajectory. And finally, Lemma~\ref{lemma:approx} contains the analytical close form approximation of the result obtained in Theorem~\ref{theorem:nonselfintersectioin}.

\begin{theorem} \label{theorem:compatibility}
	Two satellites are located in the same relative trajectory defined in a reference frame, that is rotating at a constant spin rate of $\alpha$ about the $z$ axis, if and only if their semi-major axis, inclination, eccentricity and argument of perigee are equal, and their relative position fulfills the following expression:
	\begin{equation}
	N_p\Delta\Omega \pm N_d\Delta M = 0 \mod(2\pi),
	\end{equation}
	where the $+$ sign is used in prograde rotating reference frames ($\alpha$ is positive), and the $-$ sign for retrograde frames ($\alpha$ is negative).
\end{theorem}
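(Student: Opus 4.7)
The plan is to reduce the statement "two satellites share the same relative trajectory'' to a pair of congruences modulo $2\pi$ relating an unknown time offset $\Delta t$ to $\Delta\Omega$ and $\Delta M$, then eliminate $\Delta t$ using the commensurability identity $N_p\,(2\pi/n) = N_d\,(2\pi/|\alpha|)$ from Eq.~(3).

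First I would handle the necessity of equal $a, e, inc, \omega$ on purely geometric grounds. The relative trajectory in the rotating frame is the image of a single Keplerian orbit under the rotation of the frame: $a$ and $e$ set the altitude profile, while $inc$ and $\omega$ set the latitude profile (maximum latitude and latitude at the equatorial crossings). Distinct values of any of these produce curves with distinct altitude-versus-latitude signatures and hence cannot coincide as point sets in the rotating frame. With these four elements fixed, the position in the rotating frame is completely determined by the two angular variables $\Omega - \alpha t$ and $M_0 + n t$.

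Next I would translate "same trajectory'' into the existence of a single $\Delta t \in \mathbb{R}$ such that the position of satellite 2 at any time $t$ equals the position of satellite 1 at time $t+\Delta t$. Because the two defining coordinates are angles, this amounts to the pair of congruences
\begin{align*}
\Delta\Omega &\equiv -\alpha\,\Delta t \pmod{2\pi}, \\
\Delta M &\equiv n\,\Delta t \pmod{2\pi}.
\end{align*}
Solving each for $\Delta t$ introduces integer parameters $k,j$, and equating them gives $-n\,\Delta\Omega - \alpha\,\Delta M = 2\pi(j\alpha - kn)$. Substituting $\alpha = \pm\,n N_d/N_p$ from Eq.~(3) and multiplying through by $N_p/n$, the expression collapses to $N_p\,\Delta\Omega \pm N_d\,\Delta M = 2\pi\,(jN_d \mp kN_p)$, with the sign matching that of $\alpha$. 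Since $\gcd(N_p,N_d)=1$, Bézout's lemma tells us that the integer combination $jN_d \mp kN_p$ ranges over all of $\mathbb{Z}$, so the relation is equivalent to $N_p\,\Delta\Omega \pm N_d\,\Delta M \equiv 0 \pmod{2\pi}$. The converse direction reads the same chain backward: given the congruence, Bézout furnishes integers $j,k$ realizing the required $\Delta t$, and the two angular congruences are simultaneously satisfied.

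The step I expect to be the most delicate is the modular bookkeeping when eliminating $\Delta t$. Each of the two congruences pins $\Delta t$ only up to a period ($2\pi/\alpha$ in the first, $2\pi/n$ in the second), and these two periods are different and generally incommensurable with $2\pi$; so the "simultaneously solvable'' condition is really a Bézout statement, and one must check that the right-hand side ends up as a full multiple of $2\pi$ rather than of some rational submultiple. This is precisely the point at which the coprimality of $N_p$ and $N_d$ does essential work, and it is also what forces the orbital and frame rotation rates to be commensurable in the first place.
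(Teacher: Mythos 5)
Your proposal is correct and follows essentially the same route as the paper: parametrize the relative trajectory by time, absorb the angular nature of $\Delta\Omega$ and $\Delta M$ into integer multiples of $2\pi$, eliminate the time offset, and substitute the commensurability relation $\alpha/n = \pm N_d/N_p$. Your explicit appeal to B\'ezout's lemma with $\gcd(N_p,N_d)=1$ is in fact the careful version of the paper's final step --- after clearing denominators the right-hand side is really $2\pi\left(N_p G \pm N_d H\right)$ rather than $2\pi\left(G \pm H\right)$ as written there, and coprimality of $N_p$ and $N_d$ is precisely what is needed for that integer combination to sweep out all multiples of $2\pi$.
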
 

\begin{proof}
	The instantaneous relative trajectory of a given satellite in a reference frame rotating at an angular velocity $\alpha$ about the $z$ axis can be defined using Eq.~\eqref{eq:apgtc}:
	\begin{eqnarray} \label{eq:single_trajectory}
	\Delta \Omega & = & - \alpha t, \nonumber \\
	\Delta M & = & n t,
	\end{eqnarray}
	where $t\in[0,T_c)$ is the variable that allows to describe the trajectory. Note that all the positions defined by this relative trajectory share the same semi-major axis, inclination, eccentricity and argument of perigee of the original satellite since they were defined by the propagation of this spacecraft. Additionally, due to the fact that $\Delta \Omega$ and $\Delta M$ are angles, they are subjected to modular arithmetic, and thus, the previous expressions can be rewritten as:
	\begin{eqnarray}
	\Delta \Omega & = & - \alpha t + 2\pi G, \nonumber \\
	\Delta M & = & n t + 2\pi H,
	\end{eqnarray}
	where $G$ and $H$ are two arbitrary integer numbers. Both expressions can be combined by removing the effect of the time distribution $t$, obtaining:
	\begin{equation}
	\Delta\Omega = -\displaystyle\frac{\alpha}{n}\left(\Delta M - 2\pi H\right) + 2\pi G.
	\end{equation}
	On the other hand, from Eq.~\eqref{eq:cycle2}:
	\begin{equation} \label{eq:ndnp}
	\displaystyle\frac{\alpha}{n} = \pm \frac{N_d}{N_p},
	\end{equation}
	where the $+$ sign of $\pm$ represents prograde rotating frames (as in the case of the rotating frame associated with the Earth), while the $-$ relates to retrograde frames. Note that the inclusion of $\pm$ in the equation is required due to the fact that $N_d$ and $N_p$ are defined as positive integers (they represent number of complete revolutions). Therefore:
	\begin{equation}
	\Delta\Omega = \mp\displaystyle\frac{N_d}{N_p}\left(\Delta M - 2\pi H\right) + 2\pi G,
	\end{equation}
	which, after rearranging the expression, we obtain:
	\begin{equation}
	N_p\Delta\Omega \pm N_d \Delta M = 2\pi\left(G \pm  H\right),
	\end{equation}
	where the sum $G \pm  H$ can generate any integer number (remember that $G$ and $H$ are two free arbitrary integers), therefore, the expression can be rewritten as:
	\begin{equation}
	N_p\Delta\Omega \pm N_d \Delta M = 0 \mod\left(2\pi\right).
	\end{equation}
\end{proof}

Note that this expression represents the generalization of the compatibility equation presented in Ref.~\cite{2dlfc} where the consideration of retrograde rotating reference frames has been taken into account. We will now use this generalization to also extend the solution from Ref.~\cite{largeconstellations} to study circular non-self-intersecting trajectories that account for both prograde and retrograde reference frames.

\begin{theorem}\label{theorem:nonselfintersectioin}
	The relative trajectory described by a satellite in a circular orbit has no self intersection if and only if one of these for cases applies:
	\begin{equation}
	\left\{
	\begin{tabular}{l}
	$\alpha > 0$ and $N_p = N_d - 1$: $\cos(inc) > \displaystyle\frac{N_p}{N_d},$ \nonumber \\
	$\alpha > 0$ and $N_p = N_d + 1$: $\cos(inc) > \max\left(\displaystyle\frac{\tan(\pi N_p \tau)}{\tan(\pi N_d \tau)}\right),$ \nonumber \\
	$\alpha < 0$ and $N_p = N_d - 1$: $\cos(inc) < -\displaystyle\frac{N_p}{N_d},$ \nonumber \\
	$\alpha < 0$ and $N_p = N_d + 1$: $\cos(inc) < -\max\left(\displaystyle\frac{\tan(\pi N_p \tau)}{\tan(\pi N_d \tau)}\right),$
	\end{tabular}
	\right.
	\end{equation}
	where $\tau$ is defined in the range $(\frac{1}{N_d+N_p},\frac{3}{2}\frac{1}{N_d+N_p}]$.	
\end{theorem}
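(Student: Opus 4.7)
The plan is to reduce the self-intersection condition to a single scalar transcendental equation in a temporal ``half-gap'' variable and then analyze the solvability of that equation as the inclination varies. Because the orbit is circular, the instantaneous rotating-frame position has a clean latitude/longitude form, so the analysis can be carried out with elementary spherical trigonometry.

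First, I would parameterize the trajectory by $t\in[0,T_c)$. For a circular orbit the argument of latitude is $u(t)=\omega+M_0+nt$, so in the rotating frame the latitude satisfies $\sin\phi(t)=\sin(inc)\,\sin u(t)$ and the longitude is $\lambda(t)=\Omega_0-\alpha t+\Delta\lambda(u(t))$, where $\tan\Delta\lambda=\cos(inc)\,\tan u$ in the appropriate quadrant. A self-intersection is a pair of distinct times $t_1,t_2\in[0,T_c)$ that make both expressions coincide modulo $2\pi$.

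Second, I would split on the latitude equation. Matching latitudes forces either $u_1\equiv u_2$ or $u_1+u_2\equiv\pi$ modulo $2\pi$. The first subcase combined with the longitude match forces $t_2-t_1$ to be simultaneously a multiple of $T=2\pi/n$ and of $T_d=2\pi/|\alpha|$, hence of $T_c$, contradicting distinctness in $[0,T_c)$. The second subcase, where one pass is ascending and the other descending across a common latitude, is the substantive one: writing $\delta=(t_2-t_1)/2$ and using the identity $\Delta\lambda(\pi-u)\equiv\pi-\Delta\lambda(u)$ modulo $2\pi$, the longitude condition collapses to
\begin{equation*}
\cos(inc)=\frac{\tan(n\delta)}{\tan(\alpha\delta)}.
\end{equation*}
Rescaling with $\tau=2\delta/T_c$ and invoking $nT_c=2\pi N_p$ together with $\alpha T_c=\pm 2\pi N_d$ from Eq.~\eqref{eq:cycle2}, this becomes
\begin{equation*}
\cos(inc)=\pm\,\frac{\tan(\pi N_p\tau)}{\tan(\pi N_d\tau)},\qquad\tau\in(0,1),
\end{equation*}
where the sign matches the sign of $\alpha$.

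Third, I would analyze this right-hand side as a function of $\tau$ for each admissible coprime pair $(N_p,N_d)$. The structural claim is that whenever $|N_p-N_d|\ge 2$ the poles of numerator and denominator interleave finely enough that the ratio already sweeps across the full interval $[-1,1]$ on some sub-interval of $(0,1)$, so the self-intersection equation is solvable for every inclination; this forces $|N_p-N_d|\le 1$, and coprimality rules out $N_p=N_d$ (the geostationary-like case, which is always self-intersecting for nonzero inclination), leaving only $N_p=N_d\pm 1$. For $N_p=N_d-1$ the supremum of the positive branch of the ratio is the L'H\^opital limit $N_p/N_d$ attained as $\tau\to 0^+$ but not achieved, and no other branch between consecutive poles produces a larger positive value; this yields the closed-form bound $\cos(inc)>N_p/N_d$, whose negative image is the retrograde version. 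For $N_p=N_d+1$ the supremum is attained at an interior critical point with no closed form, and the same pole-interleaving analysis localizes the maximizer in the sub-interval $(1/(N_p+N_d),\,3/(2(N_p+N_d))]$ stated in the theorem.

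The main obstacle will be this last step: proving rigorously that for $|N_p-N_d|\ge 2$ every admissible inclination yields at least one solution of $\cos(inc)=\pm\tan(\pi N_p\tau)/\tan(\pi N_d\tau)$, and, in the surviving $N_p=N_d+1$ case, showing that the relevant supremum actually lives inside the stated narrow window rather than on another branch. Tracking the signs, poles, and monotonicity of the ratio across the $N_p+N_d$ poles it has in $(0,1)$, and exploiting the built-in symmetries to reduce the many sub-intervals to a single representative one, is where the bulk of the bookkeeping lies. This setup also motivates Lemma~\ref{lemma:approx}, which presumably records the closed-form approximation of the supremum on exactly this sub-interval.
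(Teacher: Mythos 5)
Your first two steps are sound and in fact constitute a legitimate alternative derivation of the paper's key equation: the paper obtains $\cos(inc)=\pm\tan(\pi N_p\tau)/\tan(\pi N_d\tau)$ by importing the closed-form minimum distance between two co-inclined circular orbits from Ref.~\cite{mindis} and setting it to zero, whereas you re-derive the same condition directly from latitude/longitude matching; the ascending/descending split and the disposal of the $u_1\equiv u_2$ branch via coprimality of $N_p$ and $N_d$ are both correct.

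The genuine gap is in your third step, and the proposed strategy there is not merely unproven but false. The claim that for $|N_p-N_d|\ge 2$ the ratio $h(\tau)=\tan(\pi N_p\tau)/\tan(\pi N_d\tau)$ ``sweeps across $[-1,1]$'' fails whenever $K=N_p-N_d$ is even: then $N_p$ and $N_d$ are both odd, so at $\tau=1/2$ the numerator and denominator tangents blow up \emph{simultaneously} and $h$ has a removable singularity with finite limit $N_d/N_p$. Concretely, for $N_p=3$, $N_d=1$ (prograde) the positive values attained by $h$ on $(0,1)$ are only $(0,1/3]\cup[3,\infty)$, so for any $\cos(inc)\in(1/3,1)$ your quotient equation has no solution --- yet the trajectory \emph{does} self-intersect, at $\tau=1/2$, where $\Delta\Omega\equiv\pi$ and $\Delta M\equiv\pi$ and the two loops cross on the common node line independently of the inclination. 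Your own longitude condition degenerates there to $\cot(\alpha\delta)=\cos(inc)\cot(n\delta)$ with both cotangents equal to zero, which is satisfied for every $inc$ but becomes indeterminate once you divide to form $\tan(n\delta)/\tan(\alpha\delta)$. This is precisely why the paper clears denominators, rewriting the condition as $\sin(\pi N_d\tau)\cos(\pi N_p\tau)\cos(inc)=\pm\cos(\pi N_d\tau)\sin(\pi N_p\tau)$ and then as $\sin(\pi K\tau)=\frac{\cos(inc)\mp1}{\cos(inc)\pm1}\sin(\pi(N_d+N_p)\tau)$: in that form the $\tau=1/2$ solutions survive as common zeros of both sides, and counting the interior zeros of $\sin(\pi K\tau)$ is what forces $|K|=1$. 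Your analysis of the two surviving cases ($\cos(inc)>N_p/N_d$ for $K=-1$, and the localization of the maximizer in $(\frac{1}{N_d+N_p},\frac{3}{2}\frac{1}{N_d+N_p}]$ for $K=+1$) agrees with the paper and is safe from this pathology, since for $|K|=1$ the numerator and denominator never have simultaneous poles; but without repairing the elimination of $|K|\ge 2$ the ``only if'' half of the theorem does not go through.
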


\begin{proof}
	The goal is to find a condition in a relative trajectory such that the minimum distance between two points from that relative trajectory is never zero.	From Ref.~\cite{mindis}, the minimum normalized distance between two satellites in circular orbits and at the same inclination is:
	\begin{eqnarray}\label{eq:midis}
	\rho_{min} & = & 2\left|\displaystyle\sqrt{\displaystyle\frac{1+\cos^2(inc)+\sin^2(inc)\cos(\Delta\Omega)}{2}}\sin\left(\frac{\Delta F}{2}\right)\right|; \nonumber \\
	\Delta F & = & \Delta M - 2\arctan\left(-\cos(inc)\tan\left(\displaystyle\frac{\Delta\Omega}{2}\right)\right). 
	\end{eqnarray}
	Moreover, the trajectory described by a satellite in the selected rotating reference frame is defined by Equation~\eqref{eq:single_trajectory}. Let the non-dimensional distribution variable $\tau$ be defined as $\tau = t/T_c$ where $\tau\in[0,1)$, then, the trajectory can be defined as:
	\begin{eqnarray} 
	\Delta \Omega & = & - \alpha \tau T_c, \nonumber \\
	\Delta M & = & n \tau T_c,
	\end{eqnarray}
	and using Eq.~\eqref{eq:cycle2}:
	\begin{eqnarray} \label{eq:plusminusdistribution}
	\Delta \Omega & = & \mp 2\pi N_d \tau, \nonumber \\
	\Delta M & = & 2\pi N_p \tau,
	\end{eqnarray}
	where the upper sign in $\mp$ accounts for prograde rotating reference frames ($\alpha >0$), while the lower sign for retrograde frames ($\alpha < 0$). We substitute these values back into Eq.~\eqref{eq:midis} to obtain:
	\begin{eqnarray}
	\rho_{min} & = & 2\left|\displaystyle\sqrt{\displaystyle\frac{1+\cos^2(inc)+\sin^2(inc)\cos(\mp 2\pi N_d \tau)}{2}}\sin\left(\frac{\Delta F}{2}\right)\right|; \nonumber \\
	\Delta F & = & 2\pi N_p \tau - 2\arctan\left(-\cos(inc)\tan(\mp \pi N_d \tau)\right).
	\end{eqnarray}
	We know that $\rho_{min} = 0$ if and only if: 
	\begin{equation} \label{eq:condition1}
	1+\cos^2(inc)+\sin^2(inc)\cos(\mp 2\pi N_d \tau) = 0.
	\end{equation} 
	or, alternatively, 
	\begin{equation} \label{eq:condition2}
	\sin\left(\displaystyle\frac{\Delta F}{2}\right) = 0.
	\end{equation}
	
	From the condition in Eq.~\eqref{eq:condition1} a self-intersection happens if:
	\begin{equation}
	\cos(\mp 2\pi N_d \tau) = \displaystyle\frac{-1-\cos^2(inc)}{\sin^2(inc)} = -\frac{2}{\sin^2(inc)} + 1 \leq -1, \quad \forall inc,
	\end{equation}
	which means that polar orbits are the only ones that can fulfill the condition at some point over their relative trajectories. Note that this is a result that we were expecting since polar orbits always have self intersections unless $N_d = 0$, that is, when the rotating frame of reference is coincident with the inertial frame of reference ($\alpha = 0$). 
	
	On the other hand, from the condition in Eq.~\eqref{eq:condition2}:
	\begin{equation}
	2\pi N_p \tau - 2\arctan\left(-\cos(inc)\tan(\mp \pi N_d \tau)\right) = 0 \mod(2\pi),
	\end{equation}
	which can be simplified into:
	\begin{equation}
	\tan(\pi N_p \tau + L\pi) = -\cos(inc)\tan(\mp \pi N_d \tau),
	\end{equation}
	where $L$ is an arbitrary integer number introduced to take into account the modular arithmetic of the equation. However:
	\begin{equation}
	\tan(\pi N_p \tau + L\pi) = \tan(\pi N_p \tau), \quad \forall L\in\mathbb{Z},
	\end{equation}
	and thus, the condition to have a self-intersection in the relative trajectory is:
	\begin{equation} \label{eq:collision_general}
	\cos(inc) = \pm\displaystyle\frac{\tan(\pi N_p \tau)}{\tan(\pi N_d \tau)},
	\end{equation}
	or written as a function of sines and cosines:
	\begin{equation} \label{eq:collision_general2}
	\sin(\pi N_d \tau)\cos(\pi N_p \tau)\cos(inc) = \pm \cos(\pi N_d \tau)\sin(\pi N_p \tau).
	\end{equation}
	
	Let $N_p$ be defined as $N_p = N_d + K$ where $K$ is a given integer which can be either positive or negative. Therefore, the trigonometric functions in $N_p$ from Eq.~\eqref{eq:collision_general2} can be decomposed to obtain:	
	\begin{eqnarray}
	& & \sin(\pi N_d \tau)\cos(\pi N_d \tau)\cos(\pi K \tau)\left(\cos(inc) \mp 1\right) \nonumber \\
	& = & \pm \sin^2(\pi N_d \tau)\sin(\pi K \tau)\left(\cos(inc) \mp 1\right) \pm \sin(\pi K \tau)
	\end{eqnarray}
	which after some algebraic manipulation leads to: 
	\begin{equation}
	\cos(inc) \mp 1 = \pm \displaystyle\frac{\sin(\pi K \tau)}{\sin(\pi N_d \tau)\cos(\pi N_d \tau)\cos(\pi K \tau)-\sin^2(\pi N_d \tau)\sin(\pi K \tau)}.
	\end{equation}
	Additionally, the denominator of the previous expression can be simplified to:
	\begin{eqnarray}
	& & \sin(\pi N_d \tau)\cos(\pi N_d \tau)\cos(\pi K \tau)-\sin^2(\pi N_d \tau)\sin(\pi K \tau) \nonumber \\
	& = & \sin(\pi N_d \tau)\cos(\pi N_d \tau + \pi K \tau) \nonumber \\
	& = & \displaystyle\frac{1}{2}\left[\sin(\pi(N_d-N_p)\tau) + \sin(\pi(N_d+N_p)\tau)\right] \nonumber \\
	& = & \displaystyle\frac{1}{2}\left[-\sin(\pi K \tau) + \sin(\pi(N_d+N_p)\tau)\right],
	\end{eqnarray}
	which reintroduced in the former expression provides the following relation:
	\begin{equation} \label{eq:sine_relation}
	\sin(\pi K \tau) = \displaystyle\frac{\cos(inc) \mp 1}{\cos(inc) \pm 1}\sin(\pi(N_d+N_p)\tau).
	\end{equation}
	In this expression it is important to note that $\tau \in [0,1)$ and also that both sides of the equation have a zero at $\tau = 0$ and at $\tau = 1$ no matter the values of $N_p$, $N_d$ or $K$. This means that both sides of the equation can be equal if and only if $K \neq \{-1, 1\}$, that is, there are additional zeros of $\sin(\pi K \tau)$ in the range $\tau \in [0,1)$. Therefore, $|K| = 1$ is a necessary condition (but not sufficient) to have a relative non-self-intersecting trajectory. In other words, a non-self-intersecting trajectory has either $N_p = N_d - 1$ or $N_p = N_d + 1$. In the following lines we study each case individually.
	
	If $N_p = N_d - 1$, then $K = - 1$, and Eq.~\eqref{eq:sine_relation} becomes:
	\begin{equation}
	\sin(\pi \tau) = -\displaystyle\frac{\cos(inc) \mp 1}{\cos(inc) \pm 1}\sin(\pi(N_d+N_p)\tau),
	\end{equation}
	where it is important to note that the expression:
	\begin{equation}
	-\displaystyle\frac{\cos(inc) \mp 1}{\cos(inc) \pm 1},
	\end{equation}
	is always positive no matter the value of the inclination or direction of the spin rate of the rotating frame of reference. This means that the functions on the left and the right side of the equation are both two sine functions that start with positive values. Additionally, $N_p + N_d = 2N_p + 1 > 1$ ($N_p$ has to be larger than 0 by definition), which means that the frequency on the right side function is always larger than in the left side. Therefore, the only situation where the equation has no solution is when both the maximum value of the left side equation and its slope at $\tau = 0$ are larger than the ones on the right side. That is:
	\begin{eqnarray}
	-\displaystyle\frac{\cos(inc) \mp 1}{\cos(inc) \pm 1} < 1; \nonumber \\
	-\pi(N_d+N_p)\displaystyle\frac{\cos(inc) \mp 1}{\cos(inc) \pm 1} < \pi.
	\end{eqnarray}
	From the first inequality we obtain that $\cos(inc) > 0$ (the orbit must be prograde) for prograde rotating frames ($\alpha > 0$). Conversely, $\cos(inc) < 0$ (the orbit must be retrograde) for retrograde rotating frames ($\alpha < 0$). From the second inequality we can derive that:
	\begin{equation}
	\cos(inc) > \displaystyle\frac{N_d + N_p - 1}{N_d + N_p + 1} = \frac{2N_d - 2}{2N_d} = \frac{N_p}{N_d}.
	\end{equation}
	for prograde rotating reference frames, and:
	\begin{equation}
	\cos(inc) < -\displaystyle\frac{N_d + N_p - 1}{N_d + N_p + 1} = -\frac{2N_d - 2}{2N_d} = -\frac{N_p}{N_d},
	\end{equation}
	for the case of retrograde reference frames.

	If instead $N_p = N_d + 1$, then $K = + 1$, and Eq.~\eqref{eq:sine_relation} becomes:
	\begin{equation} \label{eq:kpositive}
	\sin(\pi \tau) = \displaystyle\frac{\cos(inc) \mp 1}{\cos(inc) \pm 1}\sin(\pi(N_d+N_p)\tau),
	\end{equation}
	where the coefficient:
	\begin{equation}
	\displaystyle\frac{\cos(inc) \mp 1}{\cos(inc) \pm 1},
	\end{equation}
	is now always negative no matter the values of the inclination and $\alpha$ considered. As in the previous case, since the left side of the equation does not have zeros inside the domain $\tau\in(0,1)$, a necessary condition to prevent the equation to have real solution is to impose that the maximum value of the left side of the equation is larger than the right side. In other words:
	\begin{equation}
	1 > \displaystyle\frac{-\cos(inc) \pm 1}{\cos(inc) \pm 1}
	\end{equation} 
	from where it can be derived that $\cos(inc) > 0$ for prograde rotating frames, and $\cos(inc) < 0$ for retrograde frames. Additionally, we know from Eq.~\eqref{eq:collision_general} that the condition for collision is given by:
	\begin{equation}
	\cos(inc) = \pm\displaystyle\frac{\tan(\pi N_p \tau)}{\tan(\pi N_d \tau)},
	\end{equation}
	which means that a self-intersection will occur if and only if:
	\begin{equation}
	\left|\cos(inc)\right| < \left|\max\left(\displaystyle\frac{\tan(\pi N_p \tau)}{\tan(\pi N_d \tau)}\right)\right|.
	\end{equation}
	Therefore, if we combine this condition with the one regarding the sign of $\cos(inc)$ obtained before, we have the following conditions that prevent a relative trajectory to have any self-intersection:
	\begin{eqnarray} \label{eq:open_cos}
	\cos(inc) > \max\left(\displaystyle\frac{\tan(\pi N_p \tau)}{\tan(\pi N_d \tau)}\right), \quad \text{if } \alpha > 0; \nonumber \\
	\cos(inc) < -\max\left(\displaystyle\frac{\tan(\pi N_p \tau)}{\tan(\pi N_d \tau)}\right), \quad \text{if } \alpha < 0;
	\end{eqnarray}
	where $\tau\in[0,1)$. Note that the equation included has no known closed form expression for their maximum value. However, it is possible to better delimit where the maximum of the expression appears.
	
	Equation~\eqref{eq:kpositive} provides us with an equivalent relation to Eq.~\eqref{eq:collision_general} to determine where the maximum of $\tan(\pi N_p \tau)/\tan(\pi N_d \tau)$ appears. Particularly, Eq.~\eqref{eq:kpositive} shows a relation between two functions, $f = \sin(\pi\tau)$ which is always positive in the domain, and:
	\begin{equation}
	g = \displaystyle\frac{\cos(inc) \mp 1}{\cos(inc) \pm 1}\sin(\pi(N_d+N_p)\tau),
	\end{equation}
	which starts being negative, and becomes positive for the first time in $\tau\in[\frac{1}{N_d+N_p},\frac{2}{N_d+N_p}]$. Moreover, we know that the integer $N_d+N_p = 2N_d + 1$ is always an odd number. Therefore:
	\begin{eqnarray}
	\sin(\pi\tau) = \sin(\pi - \pi\tau), \nonumber \\
	\sin(\pi(N_d+N_p)\tau) = \sin(\pi(N_d+N_p) - \pi(N_d+N_p)\tau).
	\end{eqnarray}
	and thus, both functions are symmetrical with respect to $\tau = 1/2$, that is, $f(\tau) = f(\pi - \tau)$, and $g(\tau) = g(\pi - \tau)$.	This implies that if intersections exist between $f$ and $g$ there is at least one intersection in the range $\tau\in[\frac{1}{N_d+N_p},\frac{2}{N_d+N_p}]$ since $f$ is monotonically increasing in $\tau\in[0,1/2)$. Particularly, and due to the mentioned geometry of both functions, the first intersection, if exists, must happen before the maximum of $g$, that is, in $\tau\in[\frac{1}{N_d+N_p},\frac{3}{2}\frac{1}{N_d+N_p}]$. Thus, the maximum of $\tan(\pi N_p \tau)/\tan(\pi N_d \tau)$ happens in the range $\tau\in[\frac{1}{N_d+N_p},\frac{3}{2}\frac{1}{N_d+N_p}]$.
\end{proof}

\begin{figure}[!h]
	\centering
	{\includegraphics[width=\textwidth]{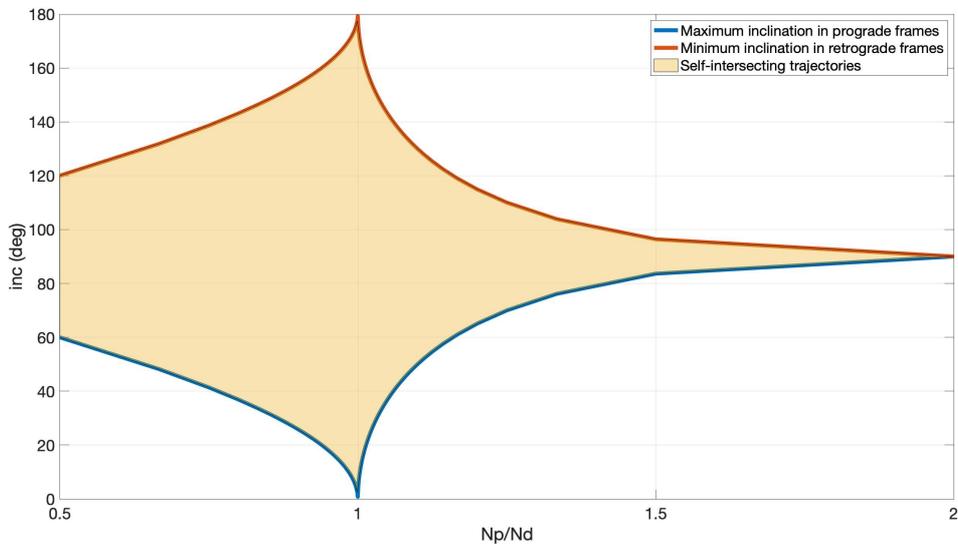}} 
	\caption{Possible inclinations in non-self-intersecting trajectories.}
	\label{fig:inc}
\end{figure}

Figure~\ref{fig:inc} shows the possible relations to obtain non-self-intersecting trajectories when varying the inclination and the rotating frame parameters $N_d$ and $N_p$. Particularly, The upper curve represents the minimum inclination required in retrograde rotating reference frames to obtain a non-self-intersecting trajectory. In a similar manner, the lower curve is the maximum inclination that trajectories defined in a prograde rotating reference frame can present. Finally, the area plotted between these two curves represents the region where it is not possible to define a non-self-intersecting trajectory. 

Moreover, one important thing to note in Theorem~\ref{theorem:nonselfintersectioin} is that when $N_p = N_d + 1$ the condition relating the inclination and the parameters $N_d$ and $N_p$ is not in closed form. To overcome this limitation, an approximated analytical expression is also provided, which presents a maximum absolute error of less than $10^{-3}$ deg in the inclination for any point in the domain.

\begin{lemma} \label{lemma:approx}
	The result from Eq.~\eqref{eq:open_cos} can be approximated by:
	\begin{equation}
	\max\left(\displaystyle\frac{\tan(\pi N_p \tau)}{\tan(\pi N_d \tau)}\right) \approx \frac{\tan\left(N_p\gamma - \displaystyle\frac{N_p\cos(\gamma)}{\left(1+(N_d+N_p)^2\right)\sin(\gamma) -2}\right)}{\tan\left(N_d\gamma - \displaystyle\frac{N_d\cos(\gamma)}{\left(1+(N_d+N_p)^2\right)\sin(\gamma) -2}\right)},
	\end{equation}
	where:
	\begin{equation}
	\gamma = \displaystyle\frac{3}{2}\pi\frac{1}{N_d+N_p}.
	\end{equation}
\end{lemma}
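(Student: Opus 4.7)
The goal is to approximate the location $\tau^{*}$ of the maximum of $R(\tau) = \tan(\pi N_p \tau)/\tan(\pi N_d \tau)$ on the bracket $\tau \in (\tfrac{1}{S}, \tfrac{3}{2S}]$ isolated in Theorem~\ref{theorem:nonselfintersectioin}, where $S := N_d + N_p = 2N_d + 1$, and then to evaluate $R(\tau^{*})$ in closed form. The plan is first to derive the critical--point equation for $R$: logarithmic differentiation yields $R'(\tau)/R(\tau) = 2\pi N_p/\sin(2\pi N_p \tau) - 2\pi N_d/\sin(2\pi N_d \tau)$, and, using the sum--to--product identities together with $N_p - N_d = 1$, the condition $R'(\tau) = 0$ reduces to $\tan(\pi S \tau) = S \tan(\pi \tau)$, or equivalently to the vanishing of
\[
G(\tau) \;:=\; \sin(\pi S \tau)\cos(\pi\tau) - S \sin(\pi\tau)\cos(\pi S \tau).
\]
Working with $G$ rather than with $\tan(\pi S \tau)$ directly is essential, because the latter has a vertical asymptote exactly at the upper endpoint of the bracket.

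Next I would take the initial guess $\tau_{0} = 3/(2S)$, so that $\pi S \tau_{0} = 3\pi/2$ and $\gamma := \pi\tau_{0} = 3\pi/(2S)$ in the notation of the lemma. Because $\sin(\pi S \tau_{0}) = -1$ and $\cos(\pi S \tau_{0}) = 0$, the Taylor coefficients of $G$ at $\tau_{0}$ reduce to very compact expressions in $\sin\gamma$, $\cos\gamma$, $N_d$ and $N_p$; in particular $G(\tau_{0}) = -\cos\gamma$ and $G'(\tau_{0}) = -(S^{2}-1)\sin\gamma$ via the identity $4N_pN_d = S^{2}-1$. I would then construct the approximate root by retaining the first few Taylor terms of $G$ around $\tau_{0}$ and solving the resulting polynomial equation for $\Delta\tau = \tau^{*} - \tau_{0}$. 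A bare Newton step produces $\Delta\tau = -\cos\gamma/[(S^{2}-1)\sin\gamma]$; to sharpen the denominator to $(1+S^{2})\sin\gamma - 2$, I would retain the quadratic contribution of $G$ (equivalently, shift the asymptote by rewriting the equation as $\cot(\pi S \delta_{\tau}) = S\tan(\gamma - \pi\delta_{\tau})$ and linearize both sides consistently) and select the small, physically meaningful root of the resulting quadratic. Substituting $\pi\tau^{*} = \gamma - \cos\gamma/[(1+S^{2})\sin\gamma - 2]$ into the two tangents of $R$ then reproduces the displayed formula, since $\pi N_{k}\tau^{*} = N_{k}(\pi\tau^{*})$ for $k \in \{p, d\}$.

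The main obstacle is not the Newton--type scheme itself, which is standard, but the identification of the specific denominator $(1+S^{2})\sin\gamma - 2$: the first--order Newton correction delivers only the coarser $(S^{2}-1)\sin\gamma$, and pinning down exactly which second--order correction must be retained and which root of the resulting quadratic selected is the delicate algebraic step. The $10^{-3}$~deg error bound stated after the lemma should then follow from a residual estimate: since $R'(\tau^{*}) = 0$, the error $R(\tau^{*}) - R(\tau^{*}_{\text{approx}})$ is quadratic in the $\tau$--residual of the approximation, and via the implicit relation $\cos(inc) = R(\tau^{*})$ this translates into the claimed inclination accuracy, which can be confirmed uniformly in $N_d$ by a direct parameter sweep rather than purely in the asymptotic regime $N_d \to \infty$.
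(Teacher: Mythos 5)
Your overall strategy---isolate the maximizer near the right endpoint $\gamma=\tfrac{3}{2}\pi/(N_d+N_p)$ of the bracket from Theorem~\ref{theorem:nonselfintersectioin}, take one corrective step from there, and substitute the corrected abscissa back into the tangent ratio---is the same as the paper's, and your reduction of the critical-point equation to $\tan(\pi S\tau)=S\tan(\pi\tau)$ with $S=N_d+N_p$ is correct. However, there is a genuine gap exactly where you flag one: your scheme does not produce the denominator $\bigl(1+S^2\bigr)\sin\gamma-2$ that the lemma asserts, and the ``second-order correction on $G$'' you invoke to bridge the difference is not shown to work (and does not obviously work: the quadratic $G(\tau_0)+G'(\tau_0)\Delta+\tfrac12 G''(\tau_0)\Delta^2=0$, with $G''(\tau_0)=-\pi^2(S^2-1)\cos\gamma$, has a small root that is not of the form $-\cos\gamma/[\pi((1+S^2)\sin\gamma-2)]$). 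The root cause is that Newton's method applied to $G=0$ and to $f'=0$, where $f(\tau)=\tan(\pi N_p\tau)/\tan(\pi N_d\tau)$, give different increments from the same starting point: $f'=hG$ for a nonvanishing factor $h$, so $-f'/f''=-G/(G'+(h'/h)G)$, and since $G(\tau_0)=-\cos\gamma\neq 0$ the extra term matters at the order you need.

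The paper avoids this entirely by Taylor-expanding the ratio $f$ itself to second order about $\tau_0=\tfrac{3}{2}/(N_d+N_p)$ and taking the vertex of the resulting parabola, $\tau_m=\tau_0-f'(\tau_0)/f''(\tau_0)$. At $\tau_0$ one has $\pi S\tau_0=3\pi/2$, so $f'(\tau_0)=-2\pi\cos\gamma/(1+\sin\gamma)^2$ and $f''(\tau_0)=2\pi^2\sin\gamma\,(1-S^2)/(1+\sin\gamma)^2+4\pi^2\cos^2\gamma/(1+\sin\gamma)^3$; the quotient collapses, via $\cos^2\gamma=(1-\sin\gamma)(1+\sin\gamma)$, to exactly $\tau_m=\tau_0-\tfrac{1}{\pi}\cos\gamma/\bigl[(1+S^2)\sin\gamma-2\bigr]$, which is the displayed formula. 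The paper also justifies the choice of $\tau_0$ as the right endpoint by examining the worst case $N_d=1$, $N_p=2$ (where the true maximizer sits at $\tau=1/2$, the right boundary), a step you assume rather than argue. To repair your proof, replace the Newton step on $G$ with the step $-f'(\tau_0)/f''(\tau_0)$ on the ratio itself, or equivalently show explicitly how the $(h'/h)G$ correction converts $(S^2-1)\sin\gamma$ into $(1+S^2)\sin\gamma-2$.
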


\begin{proof}
	From Theorem~\ref{theorem:nonselfintersectioin} we already know that the maximum of the function $\tan(\pi N_p \tau)/\tan(\pi N_d \tau)$ exists inside the range $\tau\in[\frac{1}{N_d+N_p},\frac{3}{2}\frac{1}{N_d+N_p}]$. In here, it is important to note that the size of the range is reduced as the values of $N_d$ and $N_p$ increase. Therefore it is possible to perform a Taylor expansion in $\tau$ in order to assess where the maximum happens. To that end, and in order to minimize the potential errors of the approximation, it is important to properly select to reference point with respect to which the Taylor expansion is defined. This is done by studying the worst case scenario.
	
	The worst case scenario for the Taylor expansion happens when the range $\tau\in[\frac{1}{N_d+N_p},\frac{3}{2}\frac{1}{N_d+N_p}]$ is maximized, in other words, $N_d$ and $N_p$ are minimized to $N_p = 2$ and $N_d = 1$. Under these conditions, the function to study is:
	\begin{equation}
	\displaystyle\frac{\tan(2\pi \tau)}{\tan(\pi \tau)},
	\end{equation}
	whose first derivative is:
	\begin{equation}
	\displaystyle\frac{\pi(2\sin(\pi\tau)\cos(\pi\tau)-\sin(2\pi\tau)\cos(2\pi\tau))}{\sin^2(\pi\tau)\cos^2(2\pi\tau)} = \frac{\pi\sin(2\pi\tau)(1-\cos(2\pi\tau))}{\sin^2(\pi\tau)\cos^2(2\pi\tau)},
	\end{equation}
	which implies that the maximum of the function is at $\tau = 1/2$, that is, it is on the right boundary of the range $\tau\in[\frac{1}{N_d+N_p},\frac{3}{2}\frac{1}{N_d+N_p}]$. Therefore, we will use the right boundary as the reference point for the expansion.
	
	Let the reference point for the Taylor expansion be:
	\begin{equation}
	\tau_0 = \displaystyle\frac{3}{2}\frac{1}{N_d+N_p}.
	\end{equation}
	Then, the target function can be approximated by:
	\begin{equation}
	f(\tau) = \displaystyle\frac{\tan(\pi N_p \tau)}{\tan(\pi N_d \tau)} \approx f\Big|_{\tau_0} + \frac{df}{d\tau}\Big|_{\tau_0}\left(\tau - \tau_0\right) + \frac{1}{2}\frac{d^2f}{d\tau^2}\Big|_{\tau_0}\left(\tau - \tau_0\right)^2,
	\end{equation}
	where the subindexes represent where the function and its derivatives are evaluated. Since the approximation represents a parabola, we know that the maximum of the previous approximation happens at:
	\begin{equation}
	\tau_m = \tau_0 - \displaystyle\frac{\frac{df}{d\tau}\Big|_{\tau_0}}{\frac{d^2f}{d\tau^2}\Big|_{\tau_0}},
	\end{equation}
	where:
	\begin{eqnarray}
	\displaystyle\frac{df}{d\tau}\Big|_{\tau_0} & = & -\frac{2\pi\cos(\pi\tau_0)}{\left(1+\sin(\pi\tau_0)\right)^2};\nonumber \\
	\displaystyle\frac{df}{d\tau}\Big|_{\tau_0} & = & \frac{2\pi^2\sin(\pi\tau_0)\left(1-(N_d+N_p)^2\right)}{\left(1+\sin(\pi\tau_0)\right)^2} + \frac{4\pi^2\cos^2(\pi\tau)}{\left(1+\sin(\pi\tau_0)\right)^3}.
	\end{eqnarray}
	Note that both expressions are negative, thus the maximum of the function is on the left of $\tau_0$ as derived by Theorem~\ref{theorem:nonselfintersectioin}. Therefore, we can obtain the value of $\tau_m$ just by performing some elemental algebraic manipulations in the previous expressions:
	\begin{equation}
	\tau_m = \tau_0 - \displaystyle\frac{1}{\pi}\frac{\cos(\pi\tau_0)}{\left(1+(N_d+N_p)^2\right)\sin(\pi\tau_0) -2},
	\end{equation}
	This means that the approximated maximum value of function $f$ is:
	\begin{equation}
	\max\left(\displaystyle\frac{\tan(\pi N_p \tau)}{\tan(\pi N_d \tau)}\right) \approx \frac{\tan\left(N_p\gamma - \displaystyle\frac{N_p\cos(\gamma)}{\left(1+(N_d+N_p)^2\right)\sin(\gamma) -2}\right)}{\tan\left(N_d\gamma - \displaystyle\frac{N_d\cos(\gamma)}{\left(1+(N_d+N_p)^2\right)\sin(\gamma) -2}\right)},
	\end{equation}	
	where $\gamma = \pi\tau_0$. 
\end{proof}


\section{Application to satellite constellation design}

The idea behind this satellite constellation design is to generate a closed relative trajectory in a given rotating frame of reference that presents no self-intersections. This means that no matter where satellites are located along this relative trajectory, there is no risk of conjunction. In addition, this methodology allows us to define constellation reconfiguration in a very easy and computationally fast process by re-positioning satellites along the reference relative trajectory. This also allows to assure a minimum safety distance with the rest of the satellites of the constellation even during the reconfiguration process. Finally, this design methodology provides an estimate of the orbital capacity at a given altitude, which is extremely useful when analysing the Space Traffic Management problem.

\subsection{Minimum distance between satellites}

Satellite constellations and slotting architectures distributed in non-self-intersecting trajectories provide a unique opportunity to analytically assess minimum distances between satellites for the overall structure. The reason for this is that if the density of satellites is big enough in this kind of distributions, then it is possible to derive the minimum distance between satellites in the constellation by just computing one of the pairs of the constellation.

\begin{definition}
    A trajectory loop is defined as the section of the relative trajectory comprised in between two consecutive passes over the equator.
\end{definition}

\begin{definition}
    The interloop closest distance is the minimum distance between any two loops in the same hemisphere.
\end{definition}

\begin{figure}[!h]
	\centering
	{\includegraphics[width=0.5\textwidth]{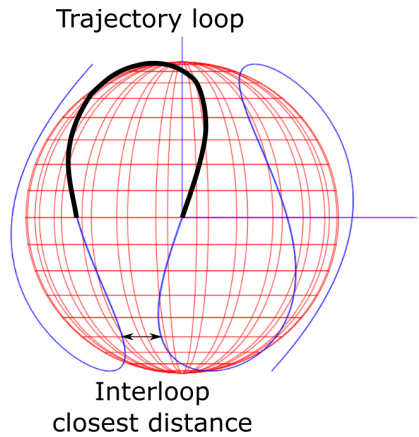}} 
	\caption{Definition of trajectory loop and interloop closest distance.}
	\label{fig:loop}
\end{figure}

It is important to note that for non-self-intersecting relative trajectories, the interloop closest distance happens between trajectory loops that are located in the same hemisphere and that are consecutive to a loop in the opposite hemisphere. As an example of that, Fig.~\ref{fig:loop} shows a trajectory loop (in black) and interloop distance for a non-self-intersecting relative trajectory. Note that this relative trajectory is going to rotate, in general, over time both in the inertial frame of reference and in the Earth-Centered Earth-Fixed frame of reference, however, the trajectory itself is going to be maintained over the dynamics of the system.

\begin{lemma} \label{lemma:mindis}
    If the interloop closest distance is larger than the minimum distance between satellites in a structure, then the overall minimum distance of any pair of satellites in the constellation is:
    \begin{eqnarray}
	    \rho_{min} & = & 2\left|\displaystyle\sqrt{\displaystyle\frac{1+\cos^2(inc)+\sin^2(inc)\cos(\mp 2\pi \frac{N_d}{N_s})}{2}}\sin\left(\frac{\Delta F}{2}\right)\right|; \nonumber \\
	    \Delta F & = & 2\pi \displaystyle\frac{N_p}{N_s} - 2\arctan\left(-\cos(inc)\tan(\mp \pi \frac{N_d}{N_s} )\right),
	\end{eqnarray}
    where $N_s$ is the number of satellite of the constellation. In addition, this minimum distance can be approximated by:
    \begin{equation}
        \rho_{min} \approx 2\pi \displaystyle\frac{N_p}{N_s} + 2\pi\cos(inc)\frac{N_d}{N_s}.
    \end{equation}
\end{lemma}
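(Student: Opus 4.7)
The plan is to reduce the global minimum over $\binom{N_s}{2}$ satellite pairs to a single pairwise evaluation, by exploiting the fact that under Eq.~\eqref{eq:tc_single} and Theorem~\ref{theorem:compatibility} all $N_s$ satellites share one closed, non-self-intersecting relative trajectory. First I would argue that the minimizing pair must lie on the same trajectory loop: by hypothesis, two satellites on different loops in the same hemisphere are separated by at least the interloop closest distance, which exceeds the overall minimum; and two satellites in opposite hemispheres are separated by at least a distance of the same order (they must both approach the equator, where the only admissible crossings are the loop endpoints of a non-self-intersecting trajectory, Theorem~\ref{theorem:nonselfintersectioin}). Within a single loop, the along-track time separation between satellites indexed $q$ and $q'$ is a multiple of $T_c/N_s$, and because the trajectory is smooth and non-self-intersecting the chord length between two of its points is locally monotone in their time separation. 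Hence the minimizing pair is a pair of consecutive satellites, which corresponds to $\tau = 1/N_s$ in Eq.~\eqref{eq:plusminusdistribution}.

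Second, I would obtain the closed-form expression by substituting the consecutive-pair displacements
\begin{equation*}
\Delta\Omega = \mp\,2\pi\,\frac{N_d}{N_s},\qquad \Delta M = 2\pi\,\frac{N_p}{N_s}
\end{equation*}
directly into the circular co-inclined minimum-distance formula Eq.~\eqref{eq:midis} borrowed from Ref.~\cite{mindis}. This step is mechanical and immediately reproduces the first displayed formula of the lemma, with the $\mp$ inheriting the prograde/retrograde convention already set in Eq.~\eqref{eq:plusminusdistribution}.

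Third, I would derive the linear approximation by Taylor expanding the exact formula in the large-$N_s$ regime, where $N_d/N_s$ and $N_p/N_s$ are both small. Then $\cos(\mp 2\pi N_d/N_s) = 1 + O(N_s^{-2})$, so the radicand inside the square root collapses to $1$ at leading order; $\tan(\mp\pi N_d/N_s)\approx \mp\pi N_d/N_s$ and $\arctan(-\cos(inc)\tan(\mp\pi N_d/N_s))\approx \pm \pi N_d\cos(inc)/N_s$; and $\sin(\Delta F/2)\approx \Delta F/2$. Assembling these expansions yields
\begin{equation*}
\rho_{min} \approx \left|\,2\pi\,\frac{N_p}{N_s} \mp 2\pi\cos(inc)\,\frac{N_d}{N_s}\,\right|,
\end{equation*}
which agrees with the stated approximation once the sign convention of Eq.~\eqref{eq:plusminusdistribution} is applied and the absolute value is resolved by the geometry (near-polar prograde cases with $\cos(inc)\to 0$, retrograde frames flipping the sign of $N_d\cos(inc)$, etc.).

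The main obstacle is the reduction step: one must rigorously rule out that a non-consecutive pair $(q,q+k)$ with $k>1$ ends up closer than the consecutive pair, i.e. one has to exclude near-grazing self-approaches of the loop. The interloop hypothesis handles pairs separated by large $k$ (those lie on different loops); handling moderate $k$ requires the strict non-self-intersection guaranteed by Theorem~\ref{theorem:nonselfintersectioin} together with local monotonicity of chord length along a smooth simple curve. Once the consecutive pair has been identified as the minimizer, the remaining two steps are a direct substitution into Eq.~\eqref{eq:midis} and a leading-order Taylor expansion, both of which are routine.
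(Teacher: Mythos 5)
Your proposal follows essentially the same route as the paper: use the interloop hypothesis to reduce the global minimum to the consecutive pair, substitute $\Delta\Omega = \mp 2\pi N_d/N_s$ and $\Delta M = 2\pi N_p/N_s$ into Eq.~\eqref{eq:midis}, and Taylor-expand for large $N_s$. The reduction step you flag as the main obstacle is also left informal in the paper (it simply invokes a ``high density of satellites''), and your retention of the absolute value and the $\mp$ sign in the final approximation is, if anything, slightly more careful than the signed expression in the lemma's statement.
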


\begin{proof}
    In a constellation based on a non-self-intersecting trajectory, satellite relative positions are defined combining Eqs.~\eqref{eq:tc_single} and~\eqref{eq:plusminusdistribution}:
    \begin{eqnarray} \label{eq:nsi_distribution}
        \Delta \Omega_{q} & = & \mp 2\pi N_d\displaystyle\frac{q}{N_s}, \nonumber \\
        \Delta M_{q} & = & 2\pi N_p\displaystyle\frac{q}{N_s}.
    \end{eqnarray}
    where $q\in\{1,\dots,N_s\}$ names each satellite of the constellation, and $N_s$ is the total number of spacecraft in the structure. Since all satellites follow the same relative trajectory and a high density of satellites is considered, there are only two possible situations in which the overall minimum distance between two satellites happens. The first situation is generated when two satellites are consecutive in the formation. Conversely, the second situation is generated when satellites are located in different trajectory loops, and thus, the minimum possible distance between satellites that can appear in this case is the interloop closest distance. This means that if the interloop closest distance is larger than the minimum distance between satellites, then only a pair of satellites has to be checked under minimum distance:
    \begin{eqnarray}
        \Delta \Omega & = & \mp 2\pi \displaystyle\frac{N_d}{N_s}, \nonumber \\
        \Delta M & = & 2\pi \displaystyle\frac{N_p}{N_s},
    \end{eqnarray}
    which introduced in Eq.~\eqref{eq:midis} leads to:
    \begin{eqnarray} \label{eq:nsi_mindis}
	    \rho_{min} & = & 2\left|\displaystyle\sqrt{\displaystyle\frac{1+\cos^2(inc)+\sin^2(inc)\cos(\mp 2\pi \frac{N_d}{N_s})}{2}}\sin\left(\frac{\Delta F}{2}\right)\right|; \nonumber \\
	    \Delta F & = & 2\pi \displaystyle\frac{N_p}{N_s} - 2\arctan\left(-\cos(inc)\tan(\mp \pi \frac{N_d}{N_s} )\right).
	\end{eqnarray}
    This expression can be approximated if the number of satellites is large enough, that is, $N_p/N_s \ll 1$ (note that $N_p = N_d \pm 1$ for non-self-intersecting trajectories). This situation is characteristic of megaconstellations and slotting architectures, but there are other space architectures that also meet this condition if $N_p$ is small compared to the number of satellites of the constellation. In this regard, it is important to note that non-self-intersecting trajectories require constrained $N_p$ and $N_d$ values based on the inclination of the orbits. If this condition is met, then a Taylor expansion in $N_p/N_s$ can be performed over Eq.~\eqref{eq:nsi_mindis} to obtain a first order approximation of the expression:
    \begin{equation} \label{eq:mindis_approx}
        \rho_{min} \approx 2\pi \displaystyle\frac{N_p}{N_s} \mp 2\pi\cos(inc)\frac{N_d}{N_s}.
    \end{equation}
\end{proof}

\begin{corollary} \label{cor:ns}
    Under the conditions presented in Lemma~\ref{lemma:mindis}, the approximate maximum number of satellites that can be located in a non-self-intersecting trajectory while maintaining a minimum distance of $\rho_{min}$ is:
    \begin{equation}
        N_s \approx \bigg\lfloor \displaystyle\frac{2\pi}{\rho_{\min}}\left(N_p \mp N_d\cos(inc)\right)\bigg\rfloor.
    \end{equation}
\end{corollary}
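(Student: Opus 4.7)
The plan is to invert the first-order approximation for $\rho_{min}$ proved in Lemma~\ref{lemma:mindis} and then enforce integrality of the satellite count. Since the corollary is stated under the hypotheses of that lemma, I may start directly from
\begin{equation*}
\rho_{min} \approx 2\pi \frac{N_p}{N_s} \mp 2\pi\cos(inc)\frac{N_d}{N_s}
= \frac{2\pi}{N_s}\bigl(N_p \mp N_d\cos(inc)\bigr),
\end{equation*}
which already encodes the assumption that $N_p/N_s$ is small enough that the first-order Taylor expansion in $N_p/N_s$ is valid, and that the interloop closest distance exceeds $\rho_{min}$ so that the nearest-neighbor pair is the one between two consecutive satellites along the single relative trajectory.

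First, I would solve the displayed approximation algebraically for $N_s$, obtaining
\begin{equation*}
N_s \approx \frac{2\pi}{\rho_{min}}\bigl(N_p \mp N_d\cos(inc)\bigr).
\end{equation*}
I would briefly justify that the right-hand side is positive, so the expression is meaningful as a count: by Theorem~\ref{theorem:nonselfintersectioin} non-self-intersecting trajectories only arise with $N_p = N_d \pm 1$ together with sign-matched bounds on $\cos(inc)$ (e.g.\ $\cos(inc) > N_p/N_d$ in the prograde $N_p = N_d - 1$ case, and analogously for the other three cases). In each of the four cases one checks that $N_p \mp N_d\cos(inc)$ has the same sign as the combination that makes the count positive, so the formula returns a sensible positive number.

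Next, I would impose that the actual number of satellites is a positive integer and that the minimum separation constraint $\rho \geq \rho_{min}$ must still hold for the realized configuration. Because $\rho_{min}$ as a function of $N_s$ is (to first order) monotonically decreasing in $N_s$, the largest admissible integer count is obtained by taking the floor of the real-valued solution. This gives
\begin{equation*}
N_s \approx \left\lfloor \frac{2\pi}{\rho_{min}}\bigl(N_p \mp N_d\cos(inc)\bigr)\right\rfloor,
\end{equation*}
which is the claimed formula.

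There is no genuine obstacle here: the statement is essentially a rearrangement of Lemma~\ref{lemma:mindis} together with a rounding step. The only point requiring a sentence of care is the sign bookkeeping of $\mp$ across the four cases of Theorem~\ref{theorem:nonselfintersectioin}, to confirm that the argument inside the floor is positive and that the rounding direction (floor rather than ceiling) is the conservative one that preserves the minimum-distance constraint.
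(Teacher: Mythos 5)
Your proposal is correct and coincides with the paper's (implicit) reasoning: the corollary is stated without proof, being the immediate algebraic inversion of the first-order approximation in Lemma~\ref{lemma:mindis} followed by rounding down to an integer. The one caveat is your positivity claim across the four cases: for $\alpha>0$ with $N_p=N_d-1$, the non-self-intersection condition $\cos(inc)>N_p/N_d$ actually makes $N_p-N_d\cos(inc)$ negative, an artifact of the absolute value dropped in passing to Eq.~\eqref{eq:mindis_approx} that the paper itself also leaves unaddressed.
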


\subsection{Example of computation of minimum distance}

In order to provide a deeper insight on Lemma~\ref{lemma:mindis} and to show the performance of the approximation proposed, we include an applied example to a constellation defined in a non-self-intersecting relative trajectory at $60^{\circ}$ in inclination, prograde reference frame and $N_p = 7$. Figure~\ref{fig:min_dis_60} shows the evolution of the minimum distance between satellites as a function of the increase in the number of spacecraft in this specific configuration. This figure clearly shows two distinct regions. The first region is defined in $N_s\in[1,1247]$ and corresponds to the situation where the minimum distance between satellites happens between two spacecraft in two consecutive loops at the same hemisphere. As can be seen, the minimum distance has a large variation when the number of satellites of the constellation is relatively small, and gets more stable as the number of satellites increases. In fact, there is a boundary in the lower value of this variation, which is defined by the interloop closest distance. On the other hand, the second region is defined in $N_s\in[1248,\infty)$ and corresponds to the case in which the minimum distance between satellites happens between two consecutive spacecraft in the relative trajectory. As can be seen, the evolution of the minimum distance in this region is linear in logarithmic scale and maintains that behaviour no matter the number of satellites of the constellation. This is also the region where Lemma~\ref{lemma:mindis} is applied.

\begin{figure}[!h]
	\centering
	{\includegraphics[width=\textwidth]{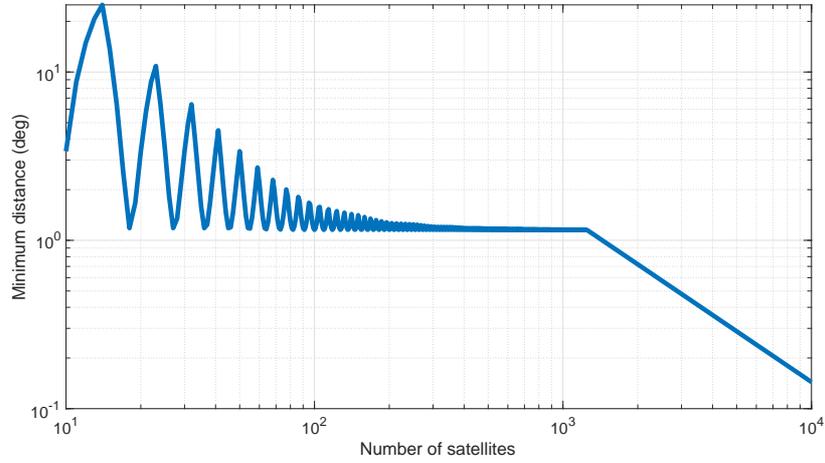}} 
	\caption{Minimum distance between satellites for a non-self-intersecting constellation at $60^{\circ}$ in inclination.}
	\label{fig:min_dis_60}
\end{figure}

\begin{figure}[!h]
	\centering
	{\includegraphics[width=\textwidth]{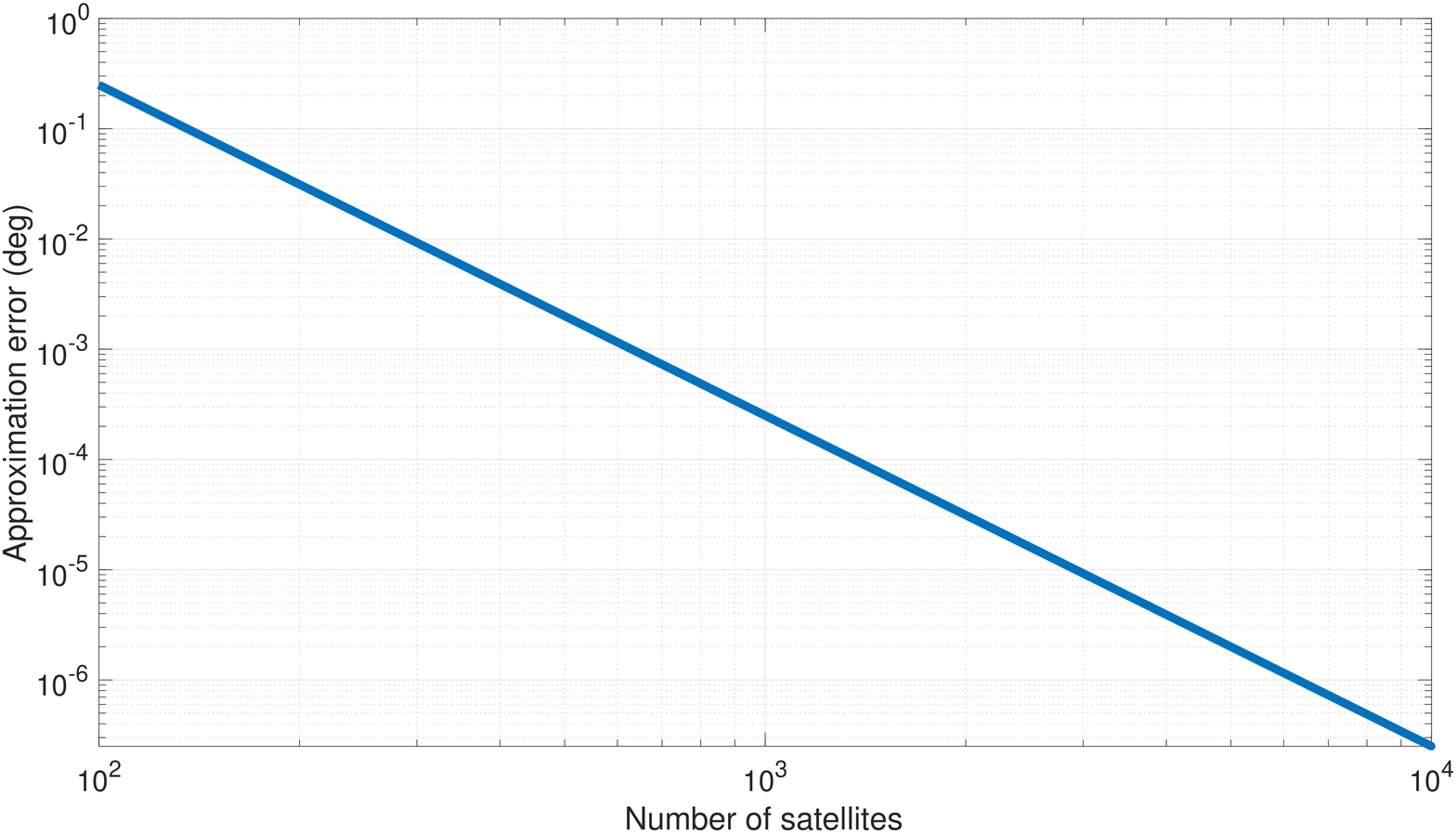}} 
	\caption{Absolute error in minimum distance between satellites by applying Lemma~\ref{lemma:mindis} to a non-self-intersecting constellation at $60^{\circ}$ in inclination.}
	\label{fig:min_dis_error}
\end{figure}

\begin{figure}[!h]
	\centering
	{\includegraphics[width=\textwidth]{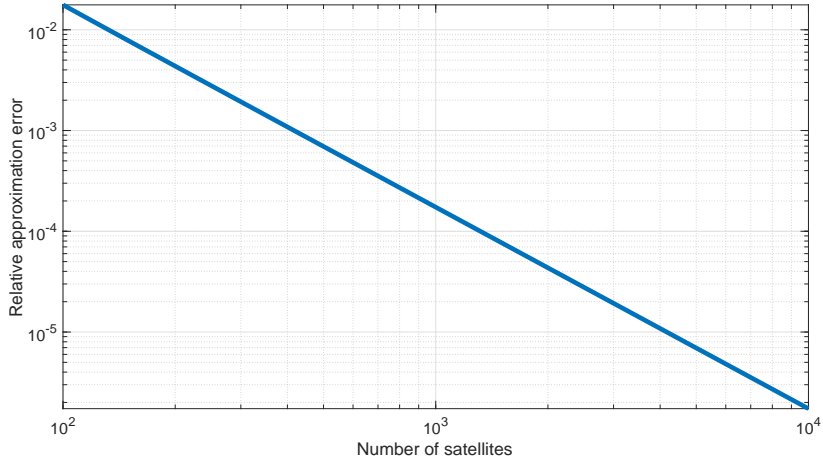}} 
	\caption{Relative error in minimum distance between satellites by applying Lemma~\ref{lemma:mindis} to a non-self-intersecting constellation at $60^{\circ}$ in inclination.}
	\label{fig:mindis_relative_error}
\end{figure}

On the other hand, Fig.~\ref{fig:min_dis_error} shows the error of using the approximation provided by Lemma~\ref{lemma:mindis} when compared with the analytical solution. Note that the figure is defined in all the range of number of satellites, which includes both of the regions described before. However, the approximation and errors presented in the figure only relate to the minimum distance between two consecutive satellites of the constellation, therefore, this approximation is only valid to compute the overall minimum distance of the structure for $N_s\in[1248,\infty)$. The figure shows that the associated error is linear, in logarithmic scale, with the number of satellites of the constellation. Additionally, even for a relatively small number of satellites the error is still smaller than 1 deg. In that regard, it is also interesting to study the relative error of this approximation. This is shown in Fig.~\ref{fig:mindis_relative_error}, where it can be observed that the relative error is two orders of magnitude smaller than the minimum distance when the number of satellites is small, and gets reduced to four and even five orders of magnitude when the constellation is comprised by thousands of satellites. This shows that the approximation can be used effectively to perform the design of slotting architectures and megaconstellations with a good accuracy.

\subsection{Analysis of design possibilities and their capacity}

The previous Lemma~\ref{lemma:mindis} and Corollary~\ref{cor:ns} can be effectively used to analytically define and estimate the capacity of constellations describing non-self-intersecting relative trajectories and use this information to perform effective satellite constellation design. For instance, let a constellation be defined at $60^{\circ}$ in inclination using a non-self-intersecting trajectory to generate the relative positions of the satellites. Since the inclination is $60^{\circ}$, the rotating frame of reference has to be prograde, and thus, $\alpha > 0$. Then, since $N_p/N_d > 0.5$ ($N_p$ and $N_d$ are both integer numbers and $N_p \neq 0$), we can use Theorem~\ref{theorem:nonselfintersectioin} to obtain this relation: 
\begin{equation}
    cos(60^{\circ}) > \max\left(\displaystyle\frac{\tan(\pi N_p \tau)}{\tan(\pi N_d \tau)}\right),
\end{equation}
which is a condition that only happens if and only if $N_p = N_d + 1 \leq 7$.

\begin{figure}[!h]
	\centering
	{\includegraphics[width=\textwidth]{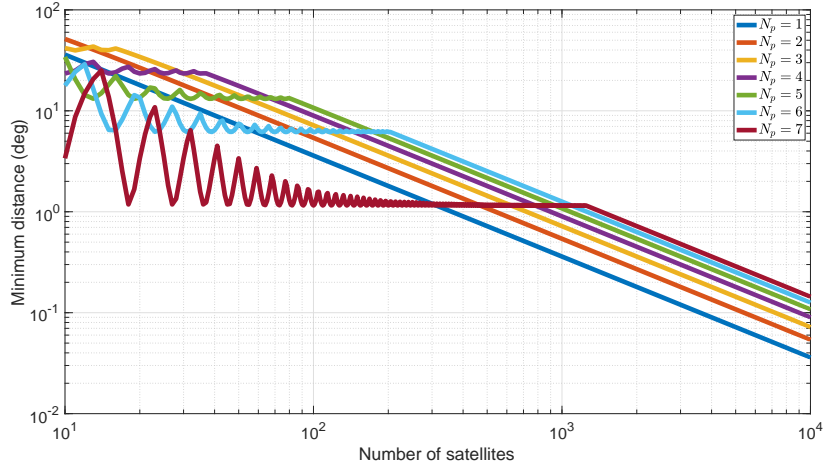}} 
	\caption{Minimum distance between satellites for non-self-intersecting constellations at $60^{\circ}$ in inclination and different rotating frames of reference.}
	\label{fig:NSI_comparison_60}
\end{figure}

Figure~\ref{fig:NSI_comparison_60} shows the evolution of the minimum distance between satellites as a function of the number of spacecraft for all seven possible configurations $N_p=\{1,\dots,7\}$. Note that there are some important properties that can be observed in this figure. First, the lower boundary of these curves in the interloop close approach has a different value, smaller the larger the value $N_p$ is. This happens due to the fact that as $N_p$ increases, so does the number of trajectory loops, and thus, the interloop closest distance has to decrease in consequence. Second, the number of satellites of the constellation in which the function changes its behaviour (when passing from interloop closest distance to distance between consecutive satellites) increases as $N_p$ increases. This situation is produced by the fact that as the interloop closest distance gets smaller, the constellation requires more satellites to reduce the overall minimum distance between spacecraft to match this value. Third, the slope of the curves in the region where the minimum distance is defined by consecutive satellites is approximately the same no matter the particular non-self-intersecting trajectory. Particularly, from Eq.~\eqref{eq:mindis_approx}:
\begin{eqnarray}
    \ln\left( \rho_{min} \right) & \approx & \ln\left(2\pi \displaystyle\frac{N_p}{N_s} \mp 2\pi\cos(inc)\frac{N_d}{N_s}\right) \nonumber \\
    & \approx & \ln\left(2\pi \left(N_p \mp N_d\cos(inc)\right)\right) - \ln\left(N_s\right),
\end{eqnarray}
which represents a straight line with negative slope equal to $-1$, and whose intercept depends on just the inclination and $N_p$ parameter from the relative trajectory. Finally, the case with $N_p = 1$ represents the case in which the relative trajectory is in fact a single inertial orbit that all the satellites of the constellation share. This means that this is the only situation in which only one region is presented in the minimum distance between satellites, and thus, the evolution is linear no matter the number of satellites considered in the structure. 

These properties can be used to select the best non-self-intersecting trajectory in which to base the architecture distribution depending on the expected number of satellites or slots of the space architecture and its orbit inclination. For instance, if we plan to expand a constellation with a large number of satellites, $N_p = 7$ should be the design parameter to choose since it allows to locate more satellites due to the longer trajectory (note also that from Eq.~\eqref{eq:cycle}, $N_p$ is related with the time that a satellite requires to complete this relative trajectory).

\subsection{Estimating the orbital capacity at a given altitude}

The result from Lemma~\ref{lemma:mindis} and Corollary~\ref{cor:ns} can be extended to obtain the estimate of the overall orbital capacity at a given altitude. The main idea is the following. Uniform constellations based on non-self-intersecting trajectories represent a subset of all the possible 2D Lattice Flower Constellations~\cite{nominal,maintenance} for a given number of satellites. Additionally, small changes in inclination and number of satellites have a low impact in the minimum distances between satellites due to their distribution in non-self-intersecting trajectories. This makes these constellations much more resilient to changes in their configuration compared with other 2D Lattice Flower Constellations. For instance, modifying, even if it is by a small quantity, either the number of satellites or the inclination of the orbits in a general 2D Lattice Flower Constellation can produce a large variation in the minimum distance between satellites making an optimal solution to be no longer optimal under this metric. However, constellations based on non-self-intersecting trajectories do not have this variability and thus, they describe the trend evolution of all the 2D Lattice Flower Constellations. In fact, they represent the lower bound in number of satellites for the best possible 2D Lattice Flower Constellations that can be defined for a given value of inclination and number of satellites. Therefore, we can use this property to describe the relation between number of satellites, inclination and minimum distance between satellites. 

\begin{figure}[!h]
	\centering
	{\includegraphics[width=0.8\textwidth]{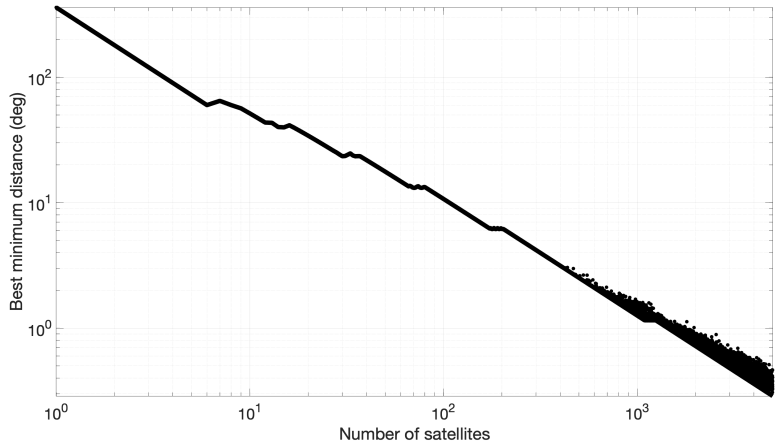}} 
	\caption{Minimum distance between satellites at $60^{\circ}$ in inclination for non-self-intersecting constellations (solid line) and 2D Lattice Flower Constellations (dots).}
	\label{fig:boundary_mindis_5000}
\end{figure}

To that end, we propose a methodology based on the idea of creating all possible combinations of non-self-intersecting trajectories that exist for a given inclination, and selecting the ones with best performance to estimate the orbital capacity of a given shell. This means that for each inclination, we have to obtain all the values of $N_p$ that generate a non-self-intersecting trajectory as seen in Fig.~\ref{fig:NSI_comparison_60} and keep the section of the curves with best performance, that is, largest minimum distance for each value of number of satellites. By doing that, we can obtain a function like the one represented by the solid line in Fig.~\ref{fig:boundary_mindis_5000}. What it is interesting is to compare this result with the best 2D Lattice Flower Constellations that maximize capacity. These optimal constellations can be obtained using the methodology proposed in Ref.~\cite{stm} and are represented in Fig.~\ref{fig:boundary_mindis_5000} as black dots. As can be seen, the solid line matches the results from the best 2D Lattice Flower Constellations up to 400 satellites. After that point, some difference is starting to appear between both solutions, being the difference larger as the number of satellites increases. This difference is generated due to optimal phasing in the orbital intersections that some 2D Lattice Flower Constellations are able to achieve for particular values of inclination, number of satellites and configuration numbers. However, even with this difference, it is possible to observe that constellations based on non-self-intersecting trajectories are still defining the lower bound of all the best 2D Lattice Flower Constellations, and thus defining a very good estimate of the orbital capacity of the system.

It is important to note that, since non-self-intersecting trajectories always exist no matter the inclination of the orbits, constellations defined in these trajectories represent the lower bound in number of satellites for the best possible 2D Lattice Flower Constellations that can be defined for a given value of inclination and number of satellites. Additionally, these non-self-intersecting relative trajectories can be effectively used to reconfigure a given constellation or slotting architecture in an easy a fast process. As an example of that, imagine that we want to distribute $10^5$ satellites at the same altitude and at an inclination of $60^{\circ}$. Using 2D Lattice Flower Constellations this implies checking $246078$ possible combinations. However, using non-self-intersecting trajectories only requires to define the proper $N_p$, and compute the distance between two consecutive satellites using Eq.~\eqref{eq:mindis_approx}, that is, a minimum distance of $0.0144^{\circ}$. This reduces significantly the number of operations required to study these large systems. 

On the other hand, it is also important to note that, as opposed to a general 2D Lattice Flower Constellation, non-self-intersecting distributions are not as sensitive to changes in either the inclination or the orbits, or the relative distribution in mean anomaly when studying the minimum distance between satellites. This means that the structure resultant from this distribution is more resilient to orbital perturbations, which can be useful for many applications.

\subsection{Example of retrograde constellation}

As an example of application of the proposed methodologies, we present in this section an example of retrograde constellation. Particularly, we focus on the optimal 4D Lattice Flower Constellation proposed in Ref.~\cite{4dlfc} that maximizes minimum distance between satellites. This constellation is based on 1000 satellites on sun-synchronous orbits defined at altitudes between $650$ km to $850$ km. This set of constellations have the following distribution parameters: $N_{o} = 500$, $N_{so} = 2$, $N_{c} = 497$, and it was already pointed out in Ref.~\cite{4dlfc} that this set was comprised by constellations defined in non-self-intersecting trajectories. In fact, this is one of the non-self-intersecting trajectories that is not covered by Ref.~\cite{largeconstellations}, but that can be defined with Theorems~\ref{theorem:compatibility} and~\ref{theorem:nonselfintersectioin}. 

From the distribution parameters of this constellation it is possible to obtain the equivalent $N_p$ and $N_d$ values. Particularly, Ref.~\cite{nominal} provided a set of relations to perform this transformation:
\begin{equation}
    N_{so}|N_d, \quad \gcd\left(\displaystyle\frac{N_d}{N_{so}},N_{o}\right) = 1, \quad \text{and} \quad \left(N_dN_{c} -  N_pN_{so}\right)|N_{o}N_{so}.
\end{equation}
which particularized for our specific problem:
\begin{equation}
    2|N_d, \quad \gcd\left(\displaystyle\frac{N_d}{2},500\right) = 1, \quad \text{and} \quad \left(497N_d -  2N_p\right)|1000.
\end{equation}
whose solution is $N_p = -3$, $N_d = 2$. Note that Ref.~\cite{nominal} only studied prograde rotating reference frames, and thus, the reason why $N_p$ is negative for this example. Also, from Eq.~\eqref{eq:ndnp} it is possible to derive that $\alpha = - 3/2 n$, where the mean motion $n$ depends on the altitude of each satellite. This means that the equivalent distribution parameters for the formulation proposed in this work (meaning that both $N_p$ and $N_d$ are positive) are $N_p = 3$, $N_d = 2$. In fact, $N_p = 3$ is the largest value of $N_p$ that can be used to obtain a non-self-intersecting trajectory at these range of sun-synchronous inclinations, which, as said before, maximizes the number of potential satellites that can be located in this kind of constellations. Figure~\ref{fig:Sun_dist} shows the evolution of the minimum distance between satellites for this set of constellations in the range of altitudes between $650$ km and $850$ km. As can be seen, the minimum distance between satellites has a small variation with the inclination, which allows not only to use the same design at multiple altitudes, but also to improve the safety of the system under orbital perturbations. This property was first used in Ref.~\cite{4dlfc} to define a sun-synchronous constellation at five different altitudes using the 4D Lattice Flower Constellation formulation. This allowed to maintain the constellation configuration under the J2 perturbation by imposing that all the relative trajectories were shifting at the same rate. In that regard, it is important to note that non-self-intersecting trajectories do not cover, in general, all the Earth surface at a given instant due to the loop structure of the relative trajectory (see for instance Fig.~\ref{fig:loop}). Nevertheless, it is possible to combine several of these trajectories at slightly different altitudes to prevent this effect as it was shown in Ref.~\cite{4dlfc}.

\begin{figure}[!h]
	\centering
	{\includegraphics[width=0.85\textwidth]{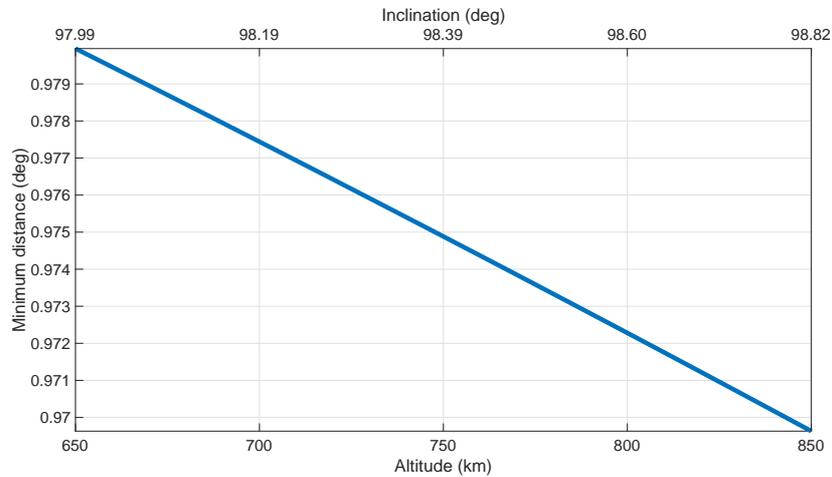}} 
	\caption{Minimum distance between satellites as a function of their altitude for sun-synchronous non-self-intersecting constellations comprised by 1000 spacecrafts.}
	\label{fig:Sun_dist}
\end{figure}


\section{Conclusions}

This work introduces a framework to study orbital capacity in Space Traffic Management problems by the use of non-self-intersecting relative trajectories. This framework is based on the idea of performing the distribution of satellites and slots over relative trajectories that do not present self-intersections in a given rotating frame of reference. This allows to design constellations whose satellites maintain at every moment a minimum distance between other compliant spacecrafts from the structure. Additionally, this work shows that performing the constellation design in these non-self-intersecting relative trajectories allows to have constellations that are more resilient under small variations of the orbital elements and that can be generated and analyzed using closed form analytical expressions. 

To that end, a set of two theorems, two lemmas and a corollary is provided. Particularly, Theorem 1 covers the conditions that two satellites require in order to be placed on the same relative trajectory. Theorem 2 focuses on the definition of the necessary conditions for a relative trajectory not to have self-intersections. On the other hand, Lemma 1 provides a closed expression to estimate the relations provided by Theorem 2. Lemma 2 proposes a closed analytical expression to determine the minimum distance between satellites in a non-self-intersecting relative trajectory. Finally, Corollary 1 provides an approximated relation between the number of satellites of a space structure and the minimum distance between the spacecraft that should be expected.

This theoretical framework is then used to perform satellite constellation design and analyze the orbital capacity of these systems from the point of view of Space Traffic Management. In that regard, we show that non-self-intersecting relative trajectories can be successfully used to design and estimate the capacity of these systems. This reduces drastically the computational time and, at the same time, this framework provides a deeper insight into the dependency of orbital parameters and satellite distributions with the minimum distance between satellites and the capacity of the system.

In addition, satellite constellations based on non-self-intersecting relative trajectories provide several advantages with respect to other satellite distributions. First, they are more resilient to orbital perturbations and other variations in the spacecraft´s orbital parameters. This means that they are safer from a perturbation and uncertainty perspective. Second, no optimization is required to obtain them as they can be fully defined by analytical methods. This allows, for instance, to define the reconfiguration of the whole structure in a very fast process should a break-up event or a collision happens in the vicinity of the architecture. Third, and as a result of the previous two points, these constellations can be used to easily define the expansion of slotting architectures when the technology, the surveillance capabilities, or the control systems allows to reduce either the size of slots or the admissible minimum distance between satellites. Finally, and since they are a subset of 2D Lattice Flower constellations, they define a lower boundary in performance of the best 2D Lattice Constellations that can be found. This allows to use constellations based on non-self-intersecting relative trajectories as a simple reference for other designs. 

\section{Acknowledgements}
This work is sponsored by the Defense Advanced Research Projects Agency (Grant N66001-20-1-4028), and the content of the information does not necessarily reflect the position or the policy of the Government.  No official endorsement should be inferred. 
Distribution statement A: Approved for public release; distribution is unlimited.



\begin{thebibliography}{}
	
	\bibitem{alfano} \textsc{Alfano, S.}, \textit{Determining a Probability-Based Distance Threshold for Conjunction Screening}, Journal of Spacecraft and Rockets, Vol. 50, No. 3, 2013, pp. 686-690. doi: 10.2514/1.A32272.

	\bibitem{conjunction} \textsc{Adurthi, N.}, and \textsc{Singla, P.}, \textit{Conjugate unscented transformation-based approach for accurate conjunction analysis}, Journal of Guidance, Control, and Dynamics, Vol. 38, No. 9, 2015, pp. 1642-1658. doi: 10.2514/1.G001027.
	
	\bibitem{risk_patera} \textsc{Patera, R.P.}, \textit{General method for calculating satellite collision probability}, Journal of Guidance, Control, and Dynamics, Vol. 24, No. 4, 2001, pp. 716-722. doi: 10.2514/2.4771.
	
	\bibitem{risk_alfano} \textsc{Alfano, S., Oltrogge, L., Krag, H., Merz, K.}, and \textsc{Hall, R.}, \textit{Risk assessment of recent high-interest conjunctions}, Acta Astronautica, Vol. 184, 2021, pp. 241-250. doi: 10.1016/j.actaastro.2021.04.009.
	
	\bibitem{CAM_formation} \textsc{Slater, G.L., Byram, S.M.}, and \textsc{Williams, T.W.}, \textit{Collision avoidance for satellites in formation flight}, Journal of Guidance, Control, and Dynamics, Vol. 29, No. 5, 2006, pp. 1140-1146. doi: 10.2514/1.16812.	
	
	\bibitem{opt_impulsive} \textsc{Bombardelli, C.}, and \textsc{Hernando-Ayuso, J.}, \textit{Optimal impulsive collision avoidance in low earth orbit}, Journal of Guidance, Control, and Dynamics, Vol. 38, No. 2, 2015, pp. 217-225. doi: 10.2514/1.G000742.
	
	\bibitem{opt_low} \textsc{Hernando-Ayuso, J.}, and \textsc{Bombardelli, C.}, \textit{Low-Thrust Collision Avoidance in Circular Orbits}, Journal of Guidance, Control, and Dynamics, Vol. 44, No. 5, 2021, pp. 983-995. doi: 10.2514/1.G005547.

	\bibitem{stm} \textsc{Arnas, D., Lifson, M., Linares, R.}, and \textsc{Avenda\~no, M.E.}, \textit{Definition of Low Earth Orbit Slotting Architectures Using 2D Lattice Flower Constellations}, Advances in Space Research, Vol. 67, No. 11, 2021, pp. 3696-3711. doi: 10.1016/j.asr.2020.04.021.
	
	\bibitem{WalkerSC_1} \textsc{Walker, J. G.},
	\textit{Satellite Constellations}, Journal of the British Interplanetary Society, Vol. 37, 1984, pp. 559-572. ISSN 0007-084X.	
	
	\bibitem{Draim4} \textsc{Draim, J. E.},
	\textit{A common-period four-satellite continuous global coverage constellation}, Journal of Guidance, Control, and Dynamics, Vol. 10, No. 5, 1987, pp. 492-499. ISSN 0731-5090.

    \bibitem{dufour} \textsc{Dufour, F.}, \textit{Coverage optimization of elliptical satellite constellations with an extended satellite triplet method}, 54th International Astronautical Congress of the International Astronautical Federation, the International Academy of Astronautics, and the International Institute of Space Law, International Astronautical Congress (IAF), A-3, 2003. doi: 10.2514/6.IAC-03-A.3.02.
    
	\bibitem{Ballard} \textsc{Ballard, A.H.}, \textit{Rosette constellations of Earth satellites}, IEEE Transactions on Aerospace and Electronic Systems, No. 5, Sept 1980, pp. 656-673. doi: 10.1109/TAES.1980.308932.

    \bibitem{Mozhaev} \textsc{Mozhaev, G.V.}, \textit{The problem of continuous Earth coverage and the Kinematically regular satellite networks}, Cosmic Res+, Vol. 11, 1973, pp. 755.
	
	\bibitem{MortariFC} \textsc{Mortari, D., Wilkins, M. P.} and \textsc{Bruccoleri, C.},
	\textit{The Flower Constellations}, Journal of the Astronautical Sciences, American Astronautical Society, Vol. 52, No. 1-2, 2004, pp.107-127. ISBN: 978-0-87703-189-5.
	
	\bibitem{2dlfc} \textsc{Avenda\~no, M. E., Davis, J. J.} and \textsc{Mortari, D.},
	\textit{The 2{-}D Lattice Theory of Flower Constellations}, Celestial Mechanics and Dynamical Astronomy, Vol. 116, No. 4, 2013, pp. 325-337. doi: 10.1007/s10569-013-9493-8.
	
	\bibitem{3dlfc} \textsc{Davis, J. J., Avenda\~no, M. E.}, and \textsc{Mortari, D.}, \textit{The 3{-}D Lattice Theory of Flower Constellations}, Celestial Mechanics and Dynamical Astronomy, Vol. 116, No. 4, 2013, pp. 339-356. doi: 10.1007/s10569-013-9494-7.

    \bibitem{4dlfc} \textsc{Arnas, D., Casanova, D.}, and \textsc{Tresaco, E.}, \textit{4D Lattice Flower Constellations}, Advances in Space Research, Vol. 67, No. 11, 2021, pp. 3683-3695. doi: 10.1016/j.asr.2020.04.018.
	
	\bibitem{ndnfc} \textsc{Arnas, D., Casanova, D.}, and \textsc{Tresaco, E.}, \textit{n-Dimensional congruent lattices using necklaces}, Advances in Space Research, Vol. 67, No. 11, 2021, pp. 3725-3743. doi: 10.1016/j.asr.2020.04.045.	
	
	\bibitem{largeconstellations} \textsc{Lee, S., Avenda\~no, M.}, and \textsc{Mortari, D.},
	\textit{ Uniform and Weighted Coverage for Large Lattice Flower Constellations}, Advances in the Astronautical Sciences, Vol. 156, 2015, pp. 3633-3648. ISBN: 978-0-87703-629-6.
	
	\bibitem{Time} \textsc{Arnas, D., Casanova, D.}, and \textsc{Tresaco, E.},
	\textit{Time distributions in satellite constellation design}, Celestial Mechanics and Dynamical Astronomy, Vol. 128, No. 2 - 3, 2017, pp. 197-219. doi: 10.1007/s10569-016-9747-3.
	
	\bibitem{Time2} \textsc{Arnas, D., Casanova, D.}, and \textsc{Tresaco, E.}, \textit{Corrections on repeating ground-track orbits and their applications in satellite constellation design}, Advances in the Astronautical Sciences, Vol. 158, 2016, pp. 2823-2840. ISBN: 978-0-87703-634-0.
	
	\bibitem{maintenance} \textsc{Arnas, D., Casanova, D.}, and \textsc{Tresaco, E.}, \textit{2D Necklace Flower Constellations applied to Earth observation missions}, Acta Astronautica, Vol. 178, 2021, pp. 203-215. doi: 10.1016/j.actaastro.2020.09.010.
	
	\bibitem{nominal} \textsc{Arnas, D.}, and \textsc{Casanova, D.}, \textit{Nominal definition of satellite constellations under the Earth gravitational potential}, Celestial Mechanics and Dynamical Astronomy, Vol. 132, No. 19, 2020. doi: 10.1007/s10569-020-09958-4.	
	
	\bibitem{mindis} \textsc{Speckman, L.E., Lang, T.J.}, and \textsc{Boyce, W.H.},
	\textit{An analysis of the line of sight vector between two satellites in common altitude circular orbits}, Astrodynamics Conference, American Institute of Aeronautics and Astronautics, Portland, 1990, paper number AIAA 90-2988-CP, pp. 866--874. doi:10.2514/6.1990-2988.

\end{thebibliography}
\end{document}